\newtheorem{theorem}{Theorem}
\newtheorem{corollary}[theorem]{Corollary}
\newtheorem{lemma}[theorem]{Lemma}
\newtheorem{proposition}[theorem]{Proposition}
\theoremstyle{definition}
\newcommand{\1}{I}
\newcommand{\aff}{\mathrm{aff}}
\newcommand{\itr}{\mathrm{int}}
\newcommand{\tr}{\mathrm{Tr}}
\newcommand{\abb}[1]{{\textnormal{#1}}} 
\newcommand{\en}[1]{{\mathscr{#1}}} 
\newcommand{\ch}[1]{{\mathcal{#1}}} 
\newcommand{\g}[1]{{\widehat{#1}}} 
\newcommand{\s}[1]{{\mathsf{#1}}} 
\newcommand{\sw}[1]{{\widetilde{#1}}} 
\newcommand{\spa}[1]{{\mathds{#1}}} 
\newcommand{\bra}[1]{\langle#1\rvert}
\newcommand{\ket}[1]{\lvert#1\rangle}
\newcommand{\op}[2]{\ket{#1}\!\bra{#2}}
\newcommand{\mclose}{\mathclose{}}
\newcommand{\fleft}{\mathopen{}\left}
\newcommand{\fright}{\aftergroup\mclose\right}
\definecolor{darkblue}{rgb}{0,0,0.6}
\definecolor{darkgreen}{rgb}{0,0.45,0.1}
\definecolor{darkred}{rgb}{0.5,0,0}
\begin{document}


\title{Converse bounds for quantum hypothesis exclusion: A divergence-radius approach}

\author{
\IEEEauthorblockN{
Kaiyuan Ji,\IEEEauthorrefmark{1} Hemant K. Mishra,\IEEEauthorrefmark{2}\IEEEauthorrefmark{1} Milán Mosonyi,\IEEEauthorrefmark{3}\IEEEauthorrefmark{4} Mark M. Wilde\IEEEauthorrefmark{1}
}

\IEEEauthorblockA{
\IEEEauthorrefmark{1} School of Electrical and Computer Engineering, Cornell University, Ithaca, New York 14850, USA \\
Email: \{kj264,hkm37,wilde\}@cornell.edu \\
\IEEEauthorrefmark{2} Department of Mathematics and Computing, Indian Institute of Technology \\ (Indian School of Mines) Dhanbad, Jharkhand 826004, India \\
\IEEEauthorrefmark{3} HUN-REN Alfréd Rényi Institute of Mathematics, Reáltanoda Street 13-15, H-1053, Budapest, Hungary \\
\IEEEauthorrefmark{4} Department of Analysis and Operations Research, Institute of Mathematics, \\ Budapest University of Technology and Economics, Műegyetem rkp.\ 3., H-1111 Budapest, Hungary \\
}
}

\maketitle


\begin{abstract}
Hypothesis exclusion is an information-theoretic task in which an experimenter aims at ruling out a false hypothesis from a finite set of known candidates, and an error occurs if and only if the hypothesis being ruled out is the ground truth.  For the tasks of quantum state exclusion and quantum channel exclusion --- where hypotheses are represented by quantum states and quantum channels, respectively --- efficiently computable upper bounds on the asymptotic error exponents were established in a recent work of the current authors~[Ji~\textit{et al.}, arXiv:2407.13728 (2024)], where the derivation was based on nonasymptotic analysis.  In this companion paper of our previous work, we provide alternative proofs for the same upper bounds on the asymptotic error exponents of quantum state and channel exclusion, but using a conceptually different approach from the one adopted in the previous work.  Specifically, we apply strong converse results for asymmetric binary hypothesis testing to distinguishing an arbitrary ``dummy'' hypothesis from each of the concerned candidates.  This leads to the desired upper bounds in terms of divergence radii via a geometrically inspired argument.
\end{abstract}



\section{Introduction}
\label{sec:introduction}

Hypothesis testing is a fundamental task in quantum information theory~\cite{wilde2017QuantumInformationTheory,hayashi2017QuantumInformationTheory,watrous2018TheoryQuantumInformation,khatri2024PrinciplesQuantumCommunication}.  Usually taking the form of quantum state or channel discrimination, it involves an experimenter trying to determine the true hypothesis --- the identity of a given state or channel --- out of a finite number of known possibilities.  In this paper, we consider a variation of hypothesis testing, termed \emph{hypothesis exclusion}, in which one is not expected to figure out the true identity completely but is merely asked to choose one false hypothesis to rule out.  For instance, in quantum state exclusion, the experimenter is given a state randomly selected from a tuple $(\rho_1,\rho_2,\dots,\rho_r)$ under a prior probability distribution $(p_1,p_2,\dots,p_r)$ and is asked to rule out one of the $r$ possibilities.  The error probability, i.e., the probability of ``falsely'' ruling out the true state, is given by $\sum_{x\in[r]}p_x\tr[\Lambda_x\rho_x]$, where $(\Lambda_1,\Lambda_2,\dots,\Lambda_r)$ is the positive operator-valued measure (POVM) that leads to the experimenter's decision on which state to rule out.

Quantum state exclusion has found its significance early in quantum foundations, such as in the studies of compatibility of quantum-state assignment~\cite{brun2002HowMuchState,caves2002ConditionsCompatibilityQuantumstate} and ontological interpretations of quantum states~\cite{pusey2012RealityQuantumState,leifer2014QuantumStateReal,barrett2014NoPsepistemicModel}.  The study of the information-theoretic limit of state exclusion was initiated recently in Ref.~\cite{mishra2024OptimalErrorExponents}, where the exact error exponent of classical state exclusion was determined and bounds were derived more generally for quantum state exclusion.

In a recent work of the current authors~\cite{ji2024BarycentricBoundsError}, further advancements were made on bounding the information-theoretic limit of quantum state exclusion and that of the more complicated task of quantum channel exclusion.  Specifically, we showed that the error exponent of quantum state exclusion is bounded from above by the multivariate log-Euclidean Chernoff divergence; we also derived an efficiently computable upper bound on the error exponent of quantum channel exclusion in terms of a barycentric Chernoff divergence, which we further showed to be achievable for classical channel exclusion.  These bounds are the tightest efficiently computable upper bounds known so far. They were established by expressing the one-shot error probability for quantum state exclusion in terms of an optimization of the hypothesis-testing divergence and analyzing the asymptotics of the latter.  This manuscript is a companion paper of our previous work~\cite{ji2024BarycentricBoundsError} and provides additional insights into the upper bounds on the error exponents of quantum state and channel exclusion established therein.  To do so, we present alternative proofs of the same upper bounds but using a method conceptually different from the one originally adopted in Ref.~\cite{ji2024BarycentricBoundsError}.  Our new method is inspired by the ``divergence sphere'' approach in Ref.~\cite[Exercise~3.57]{hayashi2017QuantumInformationTheory} and adapts it to a multi-hypotheses setting related to state exclusion, thus termed a ``divergence radius'' approach.  Specifically, we employ a strong converse result of asymmetric binary hypothesis testing and apply it to distinguishing an arbitrary ``dummy'' hypothesis from each of the concerned hypotheses; the normalization of any complete exclusion strategy gives rise to an upper bound on the error exponent of hypothesis exclusion in terms of a divergence radius.  This new approach offers an immediate and geometrically relevant explanation for why divergence radii serve as natural converse bounds for hypothesis exclusion tasks.

\section{Preliminaries}
\label{sec:preliminaries}

\subsection{Notation}
\label{sec:notation}

Let $[r]\equiv\{1,2,\dots,r\}$ denote the set of $r$ smallest distinct positive integers.  Throughout the paper, we use $\gamma_{[r]}\equiv(\gamma_1,\gamma_2,\dots,\gamma_r)$ to denote a tuple of entities with indices from $[r]$, regardless of the nature of these entities.  We also use $\s{S}^{[r]}\equiv\{\gamma_{[r]}\colon\gamma_x\in\s{S}\;\forall x\in[r]\}$ to denote the set of tuples each of whose entities belongs to a set $\s{S}$.

Let $\spa{H}_A$ denote the (finite-dimensional) Hilbert space associated with a quantum system $A$.  Let $\spa{B}_A$ denote the space of bounded operators acting on $\spa{H}_A$.  Let $\s{Herm}_A$ and $\s{PSD}_A$ denote the set of Hermitian operators and the set of positive semidefinite operators in $\spa{B}_A$, respectively.  For a Hermitian operator $\gamma\in\s{Herm}_A$, let $\gamma^0\in\s{PSD}_A$ denote the projector onto the support of $\gamma$.  For a positive semidefinite operator $\sigma\in\s{PSD}_A$, the negative power $\sigma^{-\alpha}$ for $\alpha\in[0,+\infty)$ and the logarithm $\ln\sigma$ are both taken on the support of $\sigma$.

Let $\s{D}_A$ denote the set of quantum states (i.e., unit-trace positive semidefinite operators) of a system $A$.  Let $\aff(\s{D}_A)$ denote the set of unit-trace Hermitian operators in $\spa{B}_A$, which is also the affine hull of $\s{D}_A$.  Let $\s{CP}_{A\to B}$ and $\s{C}_{A\to B}$ denote the set of completely positive (CP) maps and the set of quantum channels (i.e., completely positive trace-preserving maps) from $\spa{B}_A$ to $\spa{B}_B$, respectively.  Let $J_\ch{M}\equiv\sum_{i,j\in[d_A]}\op{i}{j}_R\otimes\ch{M}_{A\to B}[\op{i}{j}_A]$ denote the Choi operator of a CP map $\ch{M}\in\s{CP}_{A\to B}$, where $R$ is a system such that $d_R=d_A$.  Let $\s{M}_{A,r}$ denote the set of POVMs (i.e., tuples of positive semidefinite operators summing to the identity operator) on a system $A$ with $r$ possible outcomes.  Let $\s{P}_r$ denote the set of probability distributions over $[r]$.  Let $\itr(\s{P}_r)$ denote the set of probability distributions whose support is $[r]$, which is also the interior of $\s{P}_r$.

\subsection{Divergence measures}
\label{sec:divergence}

Let $\rho\in\s{D}_A$ be a state, and let $\sigma\in\s{PSD}_A$ be a positive semidefinite operator.  For $\alpha\in(1,+\infty)$, the \emph{sandwiched Rényi divergence} is defined as~\cite{muller-lennert2013QuantumRenyiEntropies,wilde2014StrongConverseClassical}
\begin{align}
	\label{eq:sandwiched}
	\sw{D}_\alpha\fleft(\rho\middle\|\sigma\fright)&\coloneq\begin{cases}
		\frac{1}{\alpha-1}\ln\left\lVert\sigma^\frac{1-\alpha}{2\alpha}\rho\sigma^\frac{1-\alpha}{2\alpha}\right\rVert_\alpha^\alpha&\text{if }\rho^0\leq\sigma^0, \\
		+\infty&\text{otherwise},
	\end{cases}
\end{align}
where $\lVert\gamma\rVert_\alpha\coloneq(\tr[(\gamma^\dagger\gamma)^\frac{\alpha}{2}])^\frac{1}{\alpha}$ denotes the $\alpha$-norm of an operator $\gamma\in\spa{B}_A$ for $\alpha\in[1,+\infty)$.  As shown in Refs.~\cite{muller-lennert2013QuantumRenyiEntropies,wilde2014StrongConverseClassical}, the limit of the sandwiched Rényi divergence as $\alpha\searrow1$ is given by the \emph{Umegaki divergence}~\cite{umegaki1962ConditionalExpectationOperator}, which is defined as
\begin{align}
	D\fleft(\rho\middle\|\sigma\fright)&\coloneq\begin{cases}
		\tr\fleft[\rho\left(\ln\rho-\ln\sigma\right)\fright]&\text{if }\rho^0\leq\sigma^0, \\
		+\infty&\text{otherwise},
	\end{cases} \\
	&=\lim_{\alpha\searrow1}\sw{D}_\alpha\fleft(\rho\middle\|\sigma\fright).
\end{align}
The \emph{geometric Rényi divergence} is defined for $\alpha\in(1,2]$ as~\cite{matsumoto2018NewQuantumVersion}
\begin{align}
	\g{D}_\alpha\fleft(\rho\middle\|\sigma\fright)&\coloneq\begin{cases}
		\frac{1}{\alpha-1}\ln\tr\fleft[\sigma\left(\sigma^{-\frac{1}{2}}\rho\sigma^{-\frac{1}{2}}\right)^\alpha\fright]&\text{if }\rho^0\leq\sigma^0, \\
		+\infty&\text{otherwise}.
	\end{cases}
\end{align}
As shown in Ref.~\cite[Proposition~79]{katariya2021GeometricDistinguishabilityMeasures}, the limit of the geometric Rényi divergence as $\alpha\searrow1$ is given by the \emph{Belavkin--Staszewski divergence}~\cite{belavkin1982$C^$algebraicGeneralizationRelative}, which is defined as
\begin{align}
	\g{D}\fleft(\rho\middle\|\sigma\fright)&\coloneq\begin{cases}
		\tr\fleft[\rho\ln\left(\rho^\frac{1}{2}\sigma^{-1}\rho^\frac{1}{2}\right)\fright]&\text{if }\rho^0\leq\sigma^0, \\
		+\infty&\text{otherwise},
	\end{cases} \\
	&=\lim_{\alpha\searrow1}\g{D}_\alpha\fleft(\rho\middle\|\sigma\fright).
\end{align}

For $\alpha\in(1,2]$, the \emph{geometric Rényi channel divergence} is defined as
\begin{align}
	&\g{D}_\alpha\fleft(\ch{N}\middle\|\ch{M}\fright)\coloneq\sup_{\rho\in\s{D}_{RA}}\g{D}_\alpha\fleft(\ch{N}_{A\to B}\fleft[\rho_{RA}\fright]\middle\|\ch{M}_{A\to B}\fleft[\rho_{RA}\fright]\fright) \\
	&\quad=\begin{cases}
		\frac{1}{\alpha-1}\ln\left\lVert\tr_B\fleft[J_\ch{M}^\frac{1}{2}\left(J_\ch{M}^{-\frac{1}{2}}J_\ch{N}J_\ch{M}^{-\frac{1}{2}}\right)^\alpha J_\ch{M}^\frac{1}{2}\fright]\right\rVert_\infty&\text{if }J_\ch{N}^0\leq J_\ch{M}^0, \\
		+\infty&\text{otherwise}.
	\end{cases} \label{eq:geometric-channel-choi}
\end{align}
As shown in Ref.~\cite[Lemma~35]{ding2023BoundingForwardClassical}, the limit of the geometric Rényi channel divergence as $\alpha\searrow1$ is given by the \emph{Belavkin--Staszewski channel divergence}:
\begin{align}
	&\g{D}\fleft(\ch{N}\middle\|\ch{M}\fright)\coloneq\sup_{\rho\in\s{D}_{RA}}\g{D}\fleft(\ch{N}_{A\to B}\fleft[\rho_{RA}\fright]\middle\|\ch{M}_{A\to B}\fleft[\rho_{RA}\fright]\fright) \\
	&\quad=\begin{cases}
		\left\lVert\tr_B\fleft[J_\ch{N}^\frac{1}{2}\ln\left(J_\ch{N}^\frac{1}{2}J_\ch{M}^{-1}J_\ch{N}^\frac{1}{2}\right)J_\ch{N}^\frac{1}{2}\fright]\right\rVert_\infty&\text{if }J_\ch{N}^0\leq J_\ch{M}^0, \\
		+\infty&\text{otherwise},
	\end{cases} \label{eq:belavkin-channel-choi}\\
	&\quad=\lim_{\alpha\searrow1}\g{D}_\alpha\fleft(\ch{N}\middle\|\ch{M}\fright).
\end{align}
The closed-form expressions in \eqref{eq:geometric-channel-choi} and \eqref{eq:belavkin-channel-choi} were established in Ref.~\cite[Theorem~3.2]{fang2021GeometricRenyiDivergence}.

Let $\rho_{[r]}\in\s{PSD}_A^{[r]}$ be a tuple of positive semidefinite operators.  The \emph{multivariate log-Euclidean Chernoff divergence}~\cite{mishra2024OptimalErrorExponents} (also see Ref.~\cite{mosonyi2024GeometricRelativeEntropies}) is defined as
\begin{align}
	C^\flat\fleft(\rho_{[r]}\fright)&\coloneq\sup_{s_{[r]}\in\s{P}_r}\lim_{\varepsilon\searrow0}-\ln\tr\fleft[\exp\left(\sum_{x\in[r]}s_x\ln\left(\rho_x+\varepsilon\1\right)\right)\fright]  \\
	&=\sup_{s_{[r]}\in\s{P}_r}\inf_{\tau\in\s{D}_A}\sum_{x\in[r]}s_xD\fleft(\tau\middle\|\rho_x\fright). \label{eq:euclidean}
\end{align}

\subsection{Extended sandwiched Rényi divergence}
\label{sec:extended}

As proposed in Ref.~\cite{wang2020AlogarithmicNegativity}, the definition of the sandwiched Rényi divergence can be generalized to an extended domain, with its first argument allowed to be a Hermitian (and not necessarily positive semidefinite) operator.

Let $\gamma\in\s{Herm}_A$ be a Hermitian operator with $\gamma\neq0$, and let $\sigma\in\s{PSD}_A$ be a positive semidefinite operator.  For $\alpha\in(1,+\infty)$, the \emph{extended sandwiched Rényi divergence} is defined as
\begin{align}
	\label{eq:sandwiched-extended}
	\sw{D}_\alpha\fleft(\gamma\middle\|\sigma\fright)&\coloneq\begin{cases}
		\frac{1}{\alpha-1}\ln\left\lVert\sigma^\frac{1-\alpha}{2\alpha}\gamma\sigma^\frac{1-\alpha}{2\alpha}\right\rVert_\alpha^\alpha&\text{if }\gamma^0\leq\sigma^0, \\
		+\infty&\text{otherwise}.
	\end{cases}
\end{align}
The extended definition of the sandwiched Rényi divergence has precisely the same formula as the original definition presented in \eqref{eq:sandwiched}, except for an enlarged domain of its first argument.  The extended sandwiched Rényi divergence has been shown to enjoy a variety of desirable properties, including the data-processing inequality~\cite[Lemma~2]{wang2020AlogarithmicNegativity} and additivity~\cite[Theorem~5.3]{ji2024BarycentricBoundsError}; see our companion paper~\cite{ji2024BarycentricBoundsError} for details and further discussions.

\section{Quantum state exclusion}
\label{sec:exclusion-state}

In the task of quantum state exclusion, an experimenter receives a system $A$ in an unknown state.  The source of the system is represented by an ensemble of states, $\en{E}\equiv(p_{[r]},\rho_{[r]})$ with $r\geq2$, $p_{[r]}\in\itr(\s{P}_r)$, and $\rho_{[r]}\in\s{D}_A^{[r]}$, and this indicates that for each $x\in[r]$, there is a prior probability $p_x\in(0,1)$ with which the state of the system is $\rho_x\in\s{D}_A$.  The experimenter's goal is to submit an index $x'\in[r]$ that \emph{differs} from the actual label of the state they received.

The most general strategy of the experimenter for state exclusion is represented by a POVM $\Lambda_{[r]}\in\s{M}_{A,r}$, which corresponds to performing a measurement on the system $A$ and submitting the measurement outcome.  An error occurs if and only if the outcome coincides with the actual label of the state.  Consequently, the \emph{(one-shot) error probability} of state exclusion for the ensemble $\en{E}$ is given by 
\begin{align}
	P_\abb{err}\fleft(\en{E}\fright)&\coloneq\inf_{\Lambda_{[r]}\in\s{M}_{A,r}}\sum_{x\in[r]}p_x\tr\fleft[\Lambda_x\rho_x\fright].
\end{align}
The \emph{(asymptotic) error exponents} of state exclusion for the ensemble $\en{E}$ are defined as
\begin{align}
	\underline{E}_\abb{err}\fleft(\en{E}\fright)&\coloneq\liminf_{n\to+\infty}\sup_{\Lambda_{[r]}^{(n)}\in\s{M}_{A^n,r}}-\frac{1}{n}\ln\left(\sum_{x\in[r]}p_x\tr\fleft[\Lambda_x^{(n)}\rho_x^{\otimes n}\fright]\right), \notag\\
	\overline{E}_\abb{err}\fleft(\en{E}\fright)&\coloneq\limsup_{n\to+\infty}\sup_{\Lambda_{[r]}^{(n)}\in\s{M}_{A^n,r}}-\frac{1}{n}\ln\left(\sum_{x\in[r]}p_x\tr\fleft[\Lambda_x^{(n)}\rho_x^{\otimes n}\fright]\right),
\end{align}
where $A^n$ denotes the system consisting of $n$ copies of $A$.

\subsection{Upper bound on the asymptotic error exponent}
\label{sec:asymptotic-state}

We provide an alternative proof for the log-Euclidean upper bound on the asymptotic error exponent $\underline{E}_\abb{err}(\en{E})$ of state exclusion, which was first established in our companion paper~\cite[Theorem~15]{ji2024BarycentricBoundsError}.  The proof presented here relates the task of state exclusion to that of asymmetric binary hypothesis testing, by utilizing the strong converse part of the quantum Stein's lemma~\cite{nagaoka2000StrongConverseSteins}.  It draws inspiration from Ref.~\cite[Exercise~3.57]{hayashi2017QuantumInformationTheory}.  See Fig.~\ref{fig} for an illustration of the proof idea.

\begin{lemma}[\!\!\cite{nagaoka2000StrongConverseSteins}]
\label{lem:converse-stein}
Let $\tau,\rho\in\s{D}_A$ be two states, and let $(\Lambda^{(n)})_n$ be a sequence of positive semidefinite operators such that $\Lambda^{(n)}\in\s{PSD}_{A^n}$ and $\Lambda^{(n)}\leq\1$ for every positive integer $n$.  If
\begin{align}
	\liminf_{n\to+\infty}-\frac{1}{n}\ln\tr\fleft[\Lambda^{(n)}\rho^{\otimes n}\fright]&>D\fleft(\tau\middle\|\rho\fright),
\end{align}
then
\begin{align}
	\limsup_{n\to+\infty}\tr\fleft[\Lambda^{(n)}\tau^{\otimes n}\fright]&=0.
\end{align}
\end{lemma}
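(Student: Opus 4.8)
The plan is to deduce the statement from the strong-converse property of the sandwiched Rényi divergence $\sw{D}_\alpha$ at $\alpha>1$, exploiting the two properties emphasized in Section~\ref{sec:divergence}: its data-processing inequality and its additivity under tensor powers, together with the limit $\sw{D}_\alpha\to D$ as $\alpha\searrow1$. First I would dispose of the degenerate case: if $\tau^0\not\leq\rho^0$, then $D(\tau\|\rho)=+\infty$, so the hypothesis $\liminf_n-\frac1n\ln\tr[\Lambda^{(n)}\rho^{\otimes n}]>D(\tau\|\rho)$ cannot hold and the claim is vacuously true. Hence I may assume $\tau^0\leq\rho^0$, so that $\sw{D}_\alpha(\tau\|\rho)<+\infty$ for every $\alpha>1$.

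For each $n$, abbreviate $a_n\coloneq\tr[\Lambda^{(n)}\tau^{\otimes n}]$ and $b_n\coloneq\tr[\Lambda^{(n)}\rho^{\otimes n}]$, and consider the two-outcome measurement $\{\Lambda^{(n)},\1-\Lambda^{(n)}\}$, which sends $\tau^{\otimes n}$ and $\rho^{\otimes n}$ to the classical distributions $(a_n,1-a_n)$ and $(b_n,1-b_n)$, respectively. For any fixed $\alpha\in(1,+\infty)$, the data-processing inequality applied to this measurement, followed by additivity under tensor powers, yields
\begin{align*}
n\sw{D}_\alpha(\tau\|\rho)
&=\sw{D}_\alpha(\tau^{\otimes n}\|\rho^{\otimes n}) \\
&\geq\frac{1}{\alpha-1}\ln\left(a_n^\alpha b_n^{1-\alpha}+(1-a_n)^\alpha(1-b_n)^{1-\alpha}\right) \\
&\geq\frac{1}{\alpha-1}\left(\alpha\ln a_n+(1-\alpha)\ln b_n\right),
\end{align*}
where the last step drops the nonnegative second summand inside the logarithm and uses $\alpha-1>0$. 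Rearranging this inequality gives
\[
\frac1n\ln a_n\leq\frac{\alpha-1}{\alpha}\left(\sw{D}_\alpha(\tau\|\rho)+\frac1n\ln b_n\right).
\]

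To close the argument, I would set $R\coloneq\liminf_n-\frac1n\ln b_n$, which by hypothesis satisfies $R>D(\tau\|\rho)$ and equivalently gives $\limsup_n\frac1n\ln b_n=-R$. Since $\sw{D}_\alpha(\tau\|\rho)\to D(\tau\|\rho)<R$ as $\alpha\searrow1$, I can fix a single $\alpha>1$ for which $\sw{D}_\alpha(\tau\|\rho)<R$. Taking $\limsup_n$ of the last display then produces
\[
\limsup_n\frac1n\ln a_n\leq\frac{\alpha-1}{\alpha}\left(\sw{D}_\alpha(\tau\|\rho)-R\right)<0,
\]
so $a_n=\tr[\Lambda^{(n)}\tau^{\otimes n}]$ vanishes exponentially fast, and in particular $\limsup_n a_n=0$, which is the desired conclusion.

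The invocations of data processing and additivity, and the closed classical form of $\sw{D}_\alpha$ on two-point distributions, are routine. The step demanding the most care is the ordering of the two limits: the parameter $\alpha$ must be selected \emph{after} the gap $R-D(\tau\|\rho)>0$ is fixed---using the continuity $\sw{D}_\alpha(\tau\|\rho)\to D(\tau\|\rho)$ to keep $\sw{D}_\alpha(\tau\|\rho)$ strictly below $R$---and only then may one send $n\to+\infty$. A minor bookkeeping point is the possibility $b_n=0$: since $\tau^0\leq\rho^0$ forces $(\tau^{\otimes n})^0\leq(\rho^{\otimes n})^0$, a vanishing $b_n$ already entails $a_n=0$, so such terms need no separate logarithmic treatment.
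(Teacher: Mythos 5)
Your proof is correct, but it takes a genuinely different route from the paper: the paper does not prove Lemma~\ref{lem:converse-stein} at all, instead invoking it as a black box from Nagaoka's work~\cite{nagaoka2000StrongConverseSteins}. Your argument is the standard modern strong-converse proof via R\'enyi divergences, and every ingredient you use (data processing, additivity under tensor powers, the limit $\sw{D}_\alpha\to D$ as $\alpha\searrow1$) is indeed available in Section~\ref{sec:divergence} of the paper. It is worth noting that your key inequality,
\begin{align}
	\frac{1}{n}\ln\tr\fleft[\Lambda^{(n)}\tau^{\otimes n}\fright]\leq\frac{\alpha-1}{\alpha}\left(\sw{D}_\alpha\fleft(\tau\middle\|\rho\fright)+\frac{1}{n}\ln\tr\fleft[\Lambda^{(n)}\rho^{\otimes n}\fright]\right),
\end{align}
is precisely the paper's own one-shot Lemma~\ref{lem:converse-hoeffding} applied to $\tau^{\otimes n}$, $\rho^{\otimes n}$, and $\Lambda^{(n)}$, followed by additivity; the paper derives that lemma by the identical two-outcome measurement-channel computation you perform, but deploys it only for the one-shot bound in Proposition~\ref{prop:oneshot-state} rather than to reprove the Stein strong converse. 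What your route buys is a self-contained paper: Lemma~\ref{lem:converse-stein} ceases to be an external citation and becomes a corollary of machinery already present. What the paper's choice buys is brevity and a cleaner conceptual separation between the qualitative asymptotic argument (Theorem~\ref{thm:asymptotic-state}, which only needs the dichotomy as a black box) and the quantitative one-shot analysis. Your handling of the delicate points is also sound: fixing $\alpha>1$ with $\sw{D}_\alpha(\tau\|\rho)<R$ \emph{before} taking $n\to+\infty$ is exactly the right order of limits, the vacuous case $\tau^0\not\leq\rho^0$ is correctly dispatched, and the observation that $\tr[\Lambda^{(n)}\rho^{\otimes n}]=0$ forces $\tr[\Lambda^{(n)}\tau^{\otimes n}]=0$ under support containment correctly covers the degenerate logarithms.
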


\begin{theorem}
\label{thm:asymptotic-state}
Let $\rho_{[r]}\in\s{D}_A^{[r]}$ be a tuple of states.  Then
\begin{align}
	\underline{E}_\abb{err}\fleft(\en{E}\fright)&\leq C^\flat\fleft(\rho_{[r]}\fright).
\end{align}
\end{theorem}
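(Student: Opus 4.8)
The plan is to exploit the normalization constraint $\sum_{x\in[r]}\Lambda^{(n)}_x=\1$ satisfied by any exclusion POVM, together with the strong converse of Stein's lemma (Lemma~\ref{lem:converse-stein}) applied to distinguishing an arbitrary ``dummy'' state $\tau\in\s{D}_A$ from each $\rho_x$. Write $\lambda\coloneq\underline{E}_\abb{err}(\en{E})$. For each $n$, I would fix a POVM $\Lambda^{(n)}_{[r]}\in\s{M}_{A^n,r}$ that comes within $1/n$ of the supremum defining the error exponent, so that $\liminf_{n}-\frac1n\ln(\sum_x p_x\tr[\Lambda^{(n)}_x\rho_x^{\otimes n}])=\lambda$. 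The key claim to establish is the following \emph{divergence-radius bound}: for every state $\tau\in\s{D}_A$,
\begin{align}
	\max_{x\in[r]}D\fleft(\tau\middle\|\rho_x\fright)\geq\lambda.
\end{align}

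To prove the claim I would argue by contradiction, supposing some $\tau$ satisfies $D(\tau\|\rho_x)<\lambda$ for all $x$. Since each $p_x>0$ is a fixed constant and $p_x\tr[\Lambda^{(n)}_x\rho_x^{\otimes n}]\leq\sum_{y}p_y\tr[\Lambda^{(n)}_y\rho_y^{\otimes n}]$, the factor $\frac1n\ln p_x$ vanishes in the limit, so that $\liminf_n-\frac1n\ln\tr[\Lambda^{(n)}_x\rho_x^{\otimes n}]\geq\lambda>D(\tau\|\rho_x)$ for each $x$. Because $0\leq\Lambda^{(n)}_x\leq\1$, Lemma~\ref{lem:converse-stein} then forces $\lim_n\tr[\Lambda^{(n)}_x\tau^{\otimes n}]=0$ for every $x$. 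Summing over the $r$ outcomes and using $\sum_x\Lambda^{(n)}_x=\1$ yields
\begin{align}
	1=\tr\fleft[\tau^{\otimes n}\fright]=\sum_{x\in[r]}\tr\fleft[\Lambda^{(n)}_x\tau^{\otimes n}\fright]\xrightarrow[n\to\infty]{}0,
\end{align}
a contradiction, where exchanging the finite sum with the limit is harmless. This establishes the claim, and taking the infimum over $\tau$ gives $\inf_{\tau\in\s{D}_A}\max_{x\in[r]}D(\tau\|\rho_x)\geq\lambda$.

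It remains to identify this divergence radius with $C^\flat(\rho_{[r]})$. Writing $\max_{x\in[r]}D(\tau\|\rho_x)=\sup_{s_{[r]}\in\s{P}_r}\sum_{x}s_xD(\tau\|\rho_x)$, I would invoke Sion's minimax theorem to exchange the infimum over $\tau$ with the supremum over $s_{[r]}$:
\begin{align}
	\inf_{\tau\in\s{D}_A}\sup_{s_{[r]}\in\s{P}_r}\sum_{x\in[r]}s_xD\fleft(\tau\middle\|\rho_x\fright)=\sup_{s_{[r]}\in\s{P}_r}\inf_{\tau\in\s{D}_A}\sum_{x\in[r]}s_xD\fleft(\tau\middle\|\rho_x\fright),
\end{align}
the right-hand side being exactly $C^\flat(\rho_{[r]})$ by \eqref{eq:euclidean}. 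Combined with the previous step, this gives $\lambda\leq C^\flat(\rho_{[r]})$, as desired.

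The main obstacle I anticipate is justifying this last minimax exchange rather than the information-theoretic core, which the strong converse lemma handles cleanly. Note that weak duality alone runs the wrong way here, since $\sup_s\inf_\tau\leq\inf_\tau\sup_s$ always, so the full strong-duality \emph{equality} is genuinely needed to pass from the radius form back to $C^\flat$. To apply Sion's theorem I would verify that $\s{P}_r$ and $\s{D}_A$ are compact and convex, that $(s_{[r]},\tau)\mapsto\sum_x s_xD(\tau\|\rho_x)$ is affine (hence concave and upper semicontinuous) in $s_{[r]}$, and that it is convex in $\tau$ by convexity of $D(\cdot\|\rho_x)$; the delicate points are the lower semicontinuity of the relative entropy in its first argument and the possibility of $+\infty$ values when $\tau^0\not\leq\rho_x^0$, which I would address by appealing to the lower semicontinuity of $D$ and noting that if the radius equals $+\infty$ the bound is vacuous.
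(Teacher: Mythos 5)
Up to the divergence-radius bound, your argument is the paper's own proof in light disguise: the dummy state $\tau$, the application of Lemma~\ref{lem:converse-stein} to each pair $(\tau,\rho_x)$, and the contradiction with $\sum_{x\in[r]}\tr[\Lambda_x^{(n)}\tau^{\otimes n}]=1$ reproduce Eqs.~\eqref{pf:asymptotic-state-1}--\eqref{pf:asymptotic-state-2}, and your choice of near-optimal POVMs is an equivalent packaging of the paper's exchange of supremum and $\liminf$ in \eqref{pf:asymptotic-state-4} (with the trivial caveat that ``within $1/n$ of the supremum'' needs rewording when the one-shot supremum is infinite). The genuine gap is in the last step, where you pass from $\underline{E}_\abb{err}\fleft(\en{E}\fright)\leq\inf_{\tau\in\s{D}_A}\max_{x\in[r]}D\fleft(\tau\middle\|\rho_x\fright)$ to the bound by $C^\flat\fleft(\rho_{[r]}\fright)$. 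Your verification of Sion's hypotheses rests on the claim that $s_{[r]}\mapsto\sum_{x\in[r]}s_xD(\tau\|\rho_x)$ is upper semicontinuous because it is affine; this is false for extended-real-valued functions. Whenever $\supp\tau\not\subseteq\supp\rho_x$ for some $x$ --- which is unavoidable for some $\tau$ unless all the $\rho_x$ share a common support --- this map equals $+\infty$ at every $s_{[r]}$ with $s_x>0$ yet is finite where $s_x=0$, so it is lower, not upper, semicontinuous at the boundary of $\s{P}_r$; Sion's theorem (stated for real-valued objectives and requiring upper semicontinuity in the concave variable) does not apply as invoked. The delicate point is thus not the lower semicontinuity of $D(\cdot\|\rho_x)$ in $\tau$, which holds and sits on the harmless side. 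Your fallback for infinities is also logically backwards: if the radius is $+\infty$, the inequality you proved ($\underline{E}_\abb{err}\leq\text{radius}$) is vacuous, but the inequality you need ($\underline{E}_\abb{err}\leq C^\flat$) is not --- by the weak duality you yourself point out, $C^\flat\leq\text{radius}$, so an infinite radius by itself says nothing about $C^\flat$.

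Both defects are repairable. The infinite-radius case follows from uniform weights: $C^\flat\fleft(\rho_{[r]}\fright)\geq\inf_{\tau\in\s{D}_A}\frac{1}{r}\sum_{x\in[r]}D\fleft(\tau\middle\|\rho_x\fright)\geq\frac{1}{r}\inf_{\tau\in\s{D}_A}\max_{x\in[r]}D\fleft(\tau\middle\|\rho_x\fright)$, so an infinite radius forces $C^\flat=+\infty$. When the radius is finite, either (i) do what the paper does and invoke a Kneser--Fan-type minimax theorem tailored to $(-\infty,+\infty]$-valued objectives that are convex and lower semicontinuous in the minimizing variable and concave in the other, with no topological hypothesis on the concave side (Ref.~\cite[Proposition~A.1]{mosonyi2021DivergenceRadiiStrong}); or (ii) repair the Sion route by hand: restrict the infimum to the compact convex set of states supported on $V\coloneq\bigcap_{x\in[r]}\supp\rho_x$ (nonempty when the radius is finite, and the restriction leaves the radius unchanged since any other $\tau$ has $\max_xD(\tau\|\rho_x)=+\infty$), note that on this set the objective is real-valued and continuous so Sion applies there, and then remove the restriction by observing that the restricted sup-inf, being the supremum of a finite concave function on $\s{P}_r$, is already attained as a supremum over $\itr(\s{P}_r)$, where the restricted and unrestricted infima coincide because all weights are strictly positive; this yields $\text{radius}\leq C^\flat\fleft(\rho_{[r]}\fright)$ as required. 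With either repair your proof goes through; as written, the minimax step is the one place where it does not.
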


\begin{proof}
Let $\tau\in\s{D}_A$ be a state, and let $(\Lambda_{[r]}^{(n)})_n$ be a sequence of POVMs such that $\Lambda_{[r]}^{(n)}\in\s{M}_{A^n,r}$ for every positive integer~$n$.  We assert that there exists $x_\star\in[r]$ such that
\begin{align}
	\label{pf:asymptotic-state-1}
	\liminf_{n\to+\infty}-\frac{1}{n}\ln\tr\fleft[\Lambda_{x_\star}^{(n)}\rho_{x_\star}^{\otimes n}\fright]&\leq D\fleft(\tau\middle\|\rho_{x_\star}\fright).
\end{align}
To see this, consider the opposite situation where
\begin{align}
	\liminf_{n\to+\infty}-\frac{1}{n}\ln\tr\fleft[\Lambda_x^{(n)}\rho_x^{\otimes n}\fright]&>D\fleft(\tau\middle\|\rho_x\fright)\quad\forall x\in[r].
\end{align}
By Lemma~\ref{lem:converse-stein}, this implies that
\begin{align}
	\limsup_{n\to+\infty}\tr\fleft[\Lambda_x^{(n)}\tau^{\otimes n}\fright]&=0\quad\forall x\in[r],
\end{align}
which contradicts the fact that $\sum_{x\in[r]}\tr[\Lambda_x^{(n)}\tau^{\otimes n}]=1$ for every positive integer $n$.  This shows by contradiction the existence of $x_\star\in[r]$ satisfying \eqref{pf:asymptotic-state-1}.  It follows from \eqref{pf:asymptotic-state-1} that
\begin{align}
	\min_{x\in[r]}\liminf_{n\to+\infty}-\frac{1}{n}\ln\tr\fleft[\Lambda_x^{(n)}\rho_x^{\otimes n}\fright]&\leq\max_{x\in[r]}D\fleft(\tau\middle\|\rho_x\fright).
\end{align}
Since this holds for every state $\tau$ and every sequence of POVMs $(\Lambda_{[r]}^{(n)})_n$, we infer that
\begin{align}
	\sup_{\mleft(\Lambda_{[r]}^{(n)}\mright)_n}\min_{x\in[r]}\liminf_{n\to+\infty}-\frac{1}{n}\ln\tr\fleft[\Lambda_x^{(n)}\rho_x^{\otimes n}\fright]&\leq\inf_{\tau\in\s{D}_A}\max_{x\in[r]}D\fleft(\tau\middle\|\rho_x\fright). \label{pf:asymptotic-state-2}
\end{align}
Then
\begin{align}
	\underline{E}_\abb{err}\fleft(\en{E}\fright)&=\liminf_{n\to+\infty}\sup_{\Lambda_{[r]}^{(n)}\in\s{M}_{A^n,r}}-\frac{1}{n}\ln\left(\sum_{x\in[r]}p_x\tr\fleft[\Lambda_x^{(n)}\rho_x^{\otimes n}\fright]\right) \label{pf:asymptotic-state-3}\\
	&=\sup_{\mleft(\Lambda_{[r]}^{(n)}\mright)_n}\liminf_{n\to+\infty}-\frac{1}{n}\ln\left(\sum_{x\in[r]}p_x\tr\fleft[\Lambda_x^{(n)}\rho_x^{\otimes n}\fright]\right) \label{pf:asymptotic-state-4}\\
	&=\sup_{\mleft(\Lambda_{[r]}^{(n)}\mright)_n}\liminf_{n\to+\infty}-\frac{1}{n}\ln\left(\max_{x\in[r]}\tr\fleft[\Lambda_x^{(n)}\rho_x^{\otimes n}\fright]\right) \\
	&\leq\sup_{\mleft(\Lambda_{[r]}^{(n)}\mright)_n}\min_{x\in[r]}\liminf_{n\to+\infty}-\frac{1}{n}\ln\left(\tr\fleft[\Lambda_x^{(n)}\rho_x^{\otimes n}\fright]\right) \\
	&\leq\inf_{\tau\in\s{D}_A}\max_{x\in[r]}D\fleft(\tau\middle\|\rho_x\fright) \label{pf:asymptotic-state-5}\\
	&=C^\flat\fleft(\rho_{[r]}\fright). \label{pf:asymptotic-state-6}
\end{align}
Here \eqref{pf:asymptotic-state-4} follows because the supremum on the right-hand side of \eqref{pf:asymptotic-state-3} can be replaced with a maximum; Eq.~\eqref{pf:asymptotic-state-5} follows from \eqref{pf:asymptotic-state-2}; Eq.~\eqref{pf:asymptotic-state-6} follows from Ref.~\cite[Proposition~A.1]{mosonyi2021DivergenceRadiiStrong} and \eqref{eq:euclidean} (also see Ref.~\cite[Eq.~(45)]{ji2024BarycentricBoundsError}).
\end{proof}

Theorem~\ref{thm:asymptotic-state} basically recovers the log-Euclidean upper bound on the error exponent using just a qualitative statement about the strong converse behaviour of asymmetric binary hypothesis testing.  However, the upper bound in Theorem~\ref{thm:asymptotic-state} is on $\underline{E}_\abb{err}(\en{E})$ instead of $\overline{E}_\abb{err}(\en{E})$, and in this sense there is a gap between the statement here and the original statement of Ref.~\cite[Theorem~15]{ji2024BarycentricBoundsError}.  In what follows, we fill this gap by providing a one-shot analysis corresponding to Theorem~\ref{thm:asymptotic-state}.

\subsection{Converse bound on the one-shot error probability}
\label{sec:oneshot-state}

The converse bound on the one-shot error probability of state exclusion in our companion paper~\cite[Proposition~12]{ji2024BarycentricBoundsError} can be recovered using an alternative method similar to the above, by making connections to the strong converse analysis of asymmetric binary hypothesis testing in the one-shot regime.  To do so, we first prove a generalization of Ref.~\cite[Eq.~(75)]{mosonyi2015QuantumHypothesisTesting} below, which may be of independent interest.

\begin{lemma}
\label{lem:converse-hoeffding}
Let $\tau\in\aff(\s{D}_A)$ be a unit-trace Hermitian operator, and let $\rho\in\s{D}_A$ be a state.  Let $\Lambda\in\s{PSD}_A$ be a positive semidefinite operator such that $\Lambda\leq\1$.  Then for all $\alpha\in(1,+\infty)$,
\begin{align}
	\left\lvert\tr\fleft[\Lambda\tau\fright]\right\rvert&\leq\left(\tr\fleft[\Lambda\rho\fright]\right)^\frac{\alpha-1}{\alpha}\exp\left(\frac{\alpha-1}{\alpha}\sw{D}_\alpha\fleft(\tau\middle\|\rho\fright)\right).
\end{align}
\end{lemma}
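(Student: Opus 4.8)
The plan is to reduce the claim to a Hölder-type estimate and then control the resulting Schatten norm by means of the Araki--Lieb--Thirring inequality. First I would dispose of the trivial case: if $\tau^0\not\leq\rho^0$, then $\sw{D}_\alpha(\tau\|\rho)=+\infty$ by \eqref{eq:sandwiched-extended} and the right-hand side is $+\infty$, so there is nothing to prove. Thus I may assume $\tau^0\leq\rho^0$ and work throughout on the support of $\rho$. Writing $s\coloneq\frac{\alpha-1}{2\alpha}$ and $\sw{T}\coloneq\rho^{-s}\tau\rho^{-s}$, the support condition gives $\tau=\rho^{s}\sw{T}\rho^{s}$, while the extended definition \eqref{eq:sandwiched-extended} furnishes the identity $\lVert\sw{T}\rVert_\alpha=\exp\!\left(\frac{\alpha-1}{\alpha}\sw{D}_\alpha(\tau\|\rho)\right)$. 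Substituting and using $\Lambda=\Lambda^{1/2}\Lambda^{1/2}$ together with cyclicity of the trace, I would rewrite
\[
	\tr[\Lambda\tau]=\tr\!\left[\sw{T}\,M\right],\qquad M\coloneq\rho^{s}\Lambda\rho^{s}\in\s{PSD}_A .
\]

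Next I would invoke Hölder's inequality for Schatten norms with the conjugate pair $(\alpha,\,q)$, where $q\coloneq\frac{\alpha}{\alpha-1}$ satisfies $\frac1\alpha+\frac1q=1$. Since $\sw{T}$ is Hermitian and $M$ is positive semidefinite, this yields
\[
	\left\lvert\tr[\Lambda\tau]\right\rvert=\left\lvert\tr[\sw{T}\,M]\right\rvert\leq\lVert\sw{T}\rVert_\alpha\,\lVert M\rVert_{q}.
\]
Given the identity for $\lVert\sw{T}\rVert_\alpha$ above, the entire statement reduces to the single estimate $\lVert M\rVert_q\leq(\tr[\Lambda\rho])^{(\alpha-1)/\alpha}$, equivalently $\tr[M^{q}]\leq\tr[\Lambda\rho]$ (note $1/q=(\alpha-1)/\alpha$).

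The crux, and the step I expect to be the main obstacle, is bounding $\tr[M^q]$. The naive idea of using $\Lambda\leq\1\Rightarrow M\leq\rho^{2s}=\rho^{1/q}$ together with monotonicity fails, because $t\mapsto t^{q}$ is not operator monotone for $q>1$. Instead I would apply the Araki--Lieb--Thirring inequality with exponent $r=q\geq1$ to the positive operators $\rho^{2s}$ and $\Lambda$: writing $M=(\rho^{2s})^{1/2}\Lambda(\rho^{2s})^{1/2}$ and noting $sq=\tfrac12$,
\[
	\tr[M^{q}]=\tr\!\left[\bigl((\rho^{2s})^{1/2}\Lambda(\rho^{2s})^{1/2}\bigr)^{q}\right]\leq\tr\!\left[\rho^{sq}\Lambda^{q}\rho^{sq}\right]=\tr[\Lambda^{q}\rho].
\]
Finally, since $0\leq\Lambda\leq\1$ and $q\geq1$, the scalar inequality $\lambda^{q}\leq\lambda$ on $[0,1]$ lifts (by the spectral calculus) to the operator inequality $\Lambda^{q}\leq\Lambda$, whence $\tr[\Lambda^{q}\rho]\leq\tr[\Lambda\rho]$ because $\rho\geq0$. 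Chaining these bounds gives $\tr[M^q]\leq\tr[\Lambda\rho]$, and combining with the Hölder estimate completes the argument. The only genuinely nontrivial ingredient is the Araki--Lieb--Thirring step, which is precisely what replaces the (false) operator-monotonicity argument and makes the constraint $\Lambda\leq\1$ usable through $\Lambda^q\leq\Lambda$; the passage to $\tau\in\aff(\s{D}_A)$ (rather than a state) requires no extra work, as it is accommodated automatically by the absolute value on the left and the extended definition of $\sw{D}_\alpha$.
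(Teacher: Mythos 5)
Your proof is correct, but it takes a genuinely different route from the paper's. The paper's argument is a two-line application of the data-processing inequality for the \emph{extended} sandwiched R\'enyi divergence~\cite[Lemma~2]{wang2020AlogarithmicNegativity}: it applies the two-outcome measurement channel $\ch{M}\colon\omega\mapsto\tr[\Lambda\omega]\pr{1}+\tr[(\1-\Lambda)\omega]\pr{2}$ to both $\tau$ and $\rho$, writes out the resulting classical divergence, and discards the second term, which immediately rearranges into the claimed bound. You instead bypass data processing entirely: after the correct reduction $\tr[\Lambda\tau]=\tr[\sw{T}M]$ with $\sw{T}=\rho^{-s}\tau\rho^{-s}$, $M=\rho^{s}\Lambda\rho^{s}$, $s=\frac{\alpha-1}{2\alpha}$, you combine Schatten--H\"older with exponents $(\alpha,q)$, the Araki--Lieb--Thirring inequality $\tr[(\rho^{s}\Lambda\rho^{s})^{q}]\leq\tr[\rho^{sq}\Lambda^{q}\rho^{sq}]$ (with $sq=\tfrac12$), and the spectral bound $\Lambda^{q}\leq\Lambda$ valid for $0\leq\Lambda\leq\1$, $q\geq1$. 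All steps check out, including the handling of the support condition ($\tau=\rho^{s}\sw{T}\rho^{s}$ requires $\tau^{0}\leq\rho^{0}$, which you verify or dispose of separately) and your observation that naive operator monotonicity of $t\mapsto t^{q}$ would fail. What your approach buys is self-containedness: you do not need the nontrivial imported fact that data processing survives the extension of $\sw{D}_\alpha$ to Hermitian first arguments --- in effect you re-derive exactly the instance of that DPI needed here (measurement channels) from classical matrix-analysis inequalities. What the paper's approach buys is brevity and a conceptual reading of the lemma: it exposes the bound as nothing more than data processing under the exclusion measurement followed by dropping a nonnegative term, which is the viewpoint that generalizes cleanly in the rest of the paper.
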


\begin{proof}
Under the following measurement channel:
\begin{align}
	\ch{M}\in\s{C}_{A\to X}&\colon\rho\mapsto\tr\fleft[\Lambda\rho\fright]\op{1}{1}+\tr\fleft[\left(\1-\Lambda\right)\rho\fright]\op{2}{2},
\end{align}
the data-processing inequality of the extended sandwiched Rényi divergence~\cite[Lemma~2]{wang2020AlogarithmicNegativity} gives that
\begin{align}
	\sw{D}_\alpha\fleft(\tau\middle\|\rho\fright)&\geq\sw{D}_\alpha\fleft(\ch{M}\fleft[\tau\fright]\middle\|\ch{M}\fleft[\rho\fright]\fright) \\
	&=\frac{1}{\alpha-1}\ln\left\lVert\left(\ch{M}\fleft[\rho\fright]\right)^\frac{1-\alpha}{2\alpha}\ch{M}\fleft[\tau\fright]\left(\ch{M}\fleft[\rho\fright]\right)^\frac{1-\alpha}{2\alpha}\right\rVert_\alpha^\alpha \\
	&=\frac{1}{\alpha-1}\ln\left(\left\lvert\tr\fleft[\Lambda\tau\fright]\tr\fleft[\Lambda\rho\fright]^\frac{1-\alpha}{\alpha}\right\rvert^\alpha\right. \notag\\
	&\quad\left.\mathop{+}\left\lvert\tr\fleft[\left(\1-\Lambda\right)\tau\fright]\tr\fleft[\left(\1-\Lambda\right)\rho\fright]^\frac{1-\alpha}{\alpha}\right\rvert^\alpha\right) \\
	&\geq\frac{1}{\alpha-1}\ln\left(\left\lvert\tr\fleft[\Lambda\tau\fright]\tr\fleft[\Lambda\rho\fright]^\frac{1-\alpha}{\alpha}\right\rvert^\alpha\right) \\
	&=\frac{1}{\alpha-1}\ln\left(\left\lvert\tr\fleft[\Lambda\tau\fright]\right\rvert^\alpha\tr\fleft[\Lambda\rho\fright]^{1-\alpha}\right).
\end{align}
Then the desired statement follows directly.
\end{proof}

\begin{proposition}
\label{prop:oneshot-state}
Let $\en{E}\equiv(p_{[r]},\rho_{[r]})$ be an ensemble of states with $p_{[r]}\in\itr(\s{P}_r)$ and $\rho_{[r]}\in\s{D}_A^{[r]}$.  Then for all $\alpha\in(1,+\infty)$,
\begin{align}
	&-\ln P_\abb{err}\fleft(\en{E}\fright) \notag\\
	&\quad\leq\sup_{s_{[r]}\in\s{P}_r}\inf_{\tau\in\aff\fleft(\s{D}_A\fright)}\sum_{x\in[r]}s_x\sw{D}_\alpha\fleft(\tau\middle\|\rho_x\fright)+\frac{\alpha}{\alpha-1}\ln\left(\frac{1}{p_{\min}}\right),
\end{align}
where $p_{\min}\equiv\min_{x\in[r]}p_x$.
\end{proposition}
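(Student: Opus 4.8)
The plan is to carry out a one-shot analogue of the proof of Theorem~\ref{thm:asymptotic-state}, with Lemma~\ref{lem:converse-hoeffding} (the exact one-shot strong converse) playing the role that Lemma~\ref{lem:converse-stein} played in the asymptotic argument, and with a single Hölder step used to bring in the prior. Fix $\alpha\in(1,+\infty)$, a POVM $\Lambda_{[r]}\in\s{M}_{A,r}$, and a ``dummy'' operator $\tau\in\aff(\s{D}_A)$, and abbreviate $q_x\coloneq\tr[\Lambda_x\rho_x]$, so that the error probability of the strategy $\Lambda_{[r]}$ is $\sum_{x\in[r]}p_xq_x$. Since $0\leq\Lambda_x\leq\1$ for each $x$ (as $\Lambda_{[r]}$ is a POVM), Lemma~\ref{lem:converse-hoeffding} applies to each pair $(\tau,\rho_x)$. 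Summing the resulting bounds and using $\sum_{x\in[r]}\tr[\Lambda_x\tau]=\tr[\tau]=1$ gives the key inequality
\begin{align*}
1\leq\sum_{x\in[r]}q_x^{\frac{\alpha-1}{\alpha}}\exp\left(\frac{\alpha-1}{\alpha}\sw{D}_\alpha(\tau\|\rho_x)\right),
\end{align*}
which is the one-shot form of the contradiction exploited in Theorem~\ref{thm:asymptotic-state}: no complete strategy can make all the individual errors $q_x$ simultaneously small relative to $\tau$.

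The next step is to introduce the prior. I would write each summand as $(p_xq_x)^{\frac{\alpha-1}{\alpha}}\cdot p_x^{-\frac{\alpha-1}{\alpha}}\exp(\frac{\alpha-1}{\alpha}\sw{D}_\alpha(\tau\|\rho_x))$ and apply Hölder's inequality with conjugate exponents $\frac{\alpha}{\alpha-1}$ and $\alpha$. The first factor then assembles into $(\sum_{x}p_xq_x)^{\frac{\alpha-1}{\alpha}}$, i.e.\ a power of the genuine error probability, leaving a second factor that no longer depends on $\Lambda_{[r]}$. Rearranging, infimizing over all POVMs (which produces $P_\abb{err}(\en{E})$), applying $-\ln(\cdot)$, and then infimizing over $\tau\in\aff(\s{D}_A)$ yields
\begin{align*}
-\ln P_\abb{err}(\en{E})\leq\inf_{\tau\in\aff(\s{D}_A)}\frac{1}{\alpha-1}\ln\sum_{x\in[r]}p_x^{-(\alpha-1)}\exp\left((\alpha-1)\sw{D}_\alpha(\tau\|\rho_x)\right).
\end{align*}

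It then remains to unpack this into the claimed divergence-radius form. Using the crude estimates $p_x^{-(\alpha-1)}\leq p_{\min}^{-(\alpha-1)}$ and $\sum_{x\in[r]}\exp((\alpha-1)\sw{D}_\alpha(\tau\|\rho_x))\leq r\exp((\alpha-1)\max_{x}\sw{D}_\alpha(\tau\|\rho_x))$, together with $r\leq1/p_{\min}$ (since $p_{\min}\leq1/r$), I can peel off the constant $\frac{\alpha}{\alpha-1}\ln(1/p_{\min})$ and obtain
\begin{align*}
-\ln P_\abb{err}(\en{E})\leq\inf_{\tau\in\aff(\s{D}_A)}\max_{x\in[r]}\sw{D}_\alpha(\tau\|\rho_x)+\frac{\alpha}{\alpha-1}\ln\frac{1}{p_{\min}}.
\end{align*}

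The last step is to rewrite $\inf_{\tau}\max_{x}$ as the $\sup_{s}\inf_{\tau}$ in the statement, i.e.\ to invoke the minimax identity $\inf_{\tau\in\aff(\s{D}_A)}\max_{x\in[r]}\sw{D}_\alpha(\tau\|\rho_x)=\sup_{s_{[r]}\in\s{P}_r}\inf_{\tau\in\aff(\s{D}_A)}\sum_{x\in[r]}s_x\sw{D}_\alpha(\tau\|\rho_x)$ for the sandwiched Rényi divergence radius, paralleling the Umegaki identity invoked in \eqref{pf:asymptotic-state-6}. I expect this minimax exchange to be the only genuinely delicate point: the ``$\geq$'' direction is immediate from $\max_x=\sup_s\sum_x s_x(\cdot)$, but the ``$\leq$'' direction does not follow from Sion's theorem in an off-the-shelf way, because $\tau\mapsto\sum_{x}s_x\sw{D}_\alpha(\tau\|\rho_x)$ is not manifestly (quasi-)convex on $\aff(\s{D}_A)$; I would therefore route it through the divergence-radius results of Ref.~\cite{mosonyi2021DivergenceRadiiStrong}, the same source used in the log-Euclidean case. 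Two routine points also warrant a check: that Lemma~\ref{lem:converse-hoeffding} is used on the \emph{extended} domain $\tau\in\aff(\s{D}_A)$ (which is precisely what lets the final infimum range over $\aff(\s{D}_A)$ rather than $\s{D}_A$), and that every manipulation remains valid when some $\sw{D}_\alpha(\tau\|\rho_x)=+\infty$, in which case the inequalities hold trivially.
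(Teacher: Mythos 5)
Your derivation is correct, and it coincides in substance with the paper's own proof, up to the divergence-radius bound
\begin{align}
	-\ln P_\abb{err}\fleft(\en{E}\fright)\leq\inf_{\tau\in\aff\fleft(\s{D}_A\fright)}\max_{x\in[r]}\sw{D}_\alpha\fleft(\tau\middle\|\rho_x\fright)+\frac{\alpha}{\alpha-1}\ln\left(\frac{1}{p_{\min}}\right).
\end{align}
Both arguments rest on the same key inequality, obtained by applying Lemma~\ref{lem:converse-hoeffding} to each pair $(\tau,\rho_x)$ and summing against $\sum_{x\in[r]}\tr[\Lambda_x\tau]=1$; you then fold in the prior via Hölder's inequality with exponents $(\tfrac{\alpha}{\alpha-1},\alpha)$, whereas the paper inserts $p_{\min}\leq p_x$ into each summand and uses concavity of $t\mapsto t^{\frac{\alpha-1}{\alpha}}$. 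The two routes are equally valid; yours passes through the slightly tighter intermediate bound $\inf_\tau\frac{1}{\alpha-1}\ln\sum_x p_x^{-(\alpha-1)}\exp((\alpha-1)\sw{D}_\alpha(\tau\|\rho_x))$ before relaxing it with the crude estimates, and your remarks about the extended domain and the $+\infty$ cases are fine.

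The gap is the final minimax exchange, and it is worse than the ``delicate point'' you describe: the identity $\inf_\tau\max_x\sw{D}_\alpha(\tau\|\rho_x)=\sup_{s_{[r]}}\inf_\tau\sum_x s_x\sw{D}_\alpha(\tau\|\rho_x)$ is \emph{false} for $\alpha\in(1,+\infty)$, so it cannot be imported from Ref.~\cite{mosonyi2021DivergenceRadiiStrong} or from anywhere else; in fact the proposition in its stated sup--inf form is itself violated. Take $r=2$, $p_1=p_2=\tfrac12$, $\rho_1=\mathrm{diag}(1-\epsilon,\epsilon)$, $\rho_2=\mathrm{diag}(\epsilon,1-\epsilon)$. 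Excluding one of two hypotheses is the same as guessing the other, so $P_\abb{err}(\en{E})=\epsilon$ (Helstrom) and the left-hand side equals $\ln(1/\epsilon)$. On the other hand, plugging $\tau=\rho_1$ and $\tau=\rho_2$ into the inner infimum gives, for every $s_{[2]}\in\s{P}_2$, $\inf_\tau\sum_x s_x\sw{D}_\alpha(\tau\|\rho_x)\leq\min\{s_1,s_2\}\,d_\alpha$, where $d_\alpha\equiv\sw{D}_\alpha(\rho_1\|\rho_2)=\sw{D}_\alpha(\rho_2\|\rho_1)\leq\ln(1/\epsilon)$ by an elementary estimate valid for $\epsilon\leq\tfrac12$; hence the claimed right-hand side is at most $\tfrac12\ln(1/\epsilon)+\tfrac{\alpha}{\alpha-1}\ln 2$, which is strictly below $\ln(1/\epsilon)$ whenever $\epsilon<2^{-2\alpha/(\alpha-1)}$. (The radius bound displayed above, by contrast, holds with equality in this example.) The mechanism is exactly the one you flagged: for $\alpha>1$ each $\sw{D}_\alpha(\cdot\|\rho_x)$ is quasi-convex but not convex in its first argument, sums of quasi-convex functions need not be quasi-convex, and here the sup--inf (a Chernoff-type quantity, capped near $\tfrac12 d_\alpha$) is genuinely smaller than the inf--max (a one-shot, trace-distance-type quantity). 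Be aware that the paper's own proof invokes Sion's theorem \cite{sion1958GeneralMinimaxTheorems} at precisely this step, Eq.~\eqref{pf:oneshot-state-3}, so your objection applies verbatim to the paper as well: the strongest statement that either derivation actually establishes is the inf--max form, and the argument must stop there. This does not touch the Umegaki-based results (Theorem~\ref{thm:asymptotic-state}, Corollary~\ref{cor:asymptotic-state}), since at $\alpha=1$ the divergence \emph{is} convex in its first argument, Sion does apply, and the analogous identity is true.
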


\begin{proof}
Let $\tau\in\aff(\s{D}_A)$ be a unit-trace Hermitian operator, and let $\Lambda_{[r]}\in\s{M}_{A,r}$ be a POVM.  Applying Lemma~\ref{lem:converse-hoeffding}, we have that
\begin{align}
	1&=\sum_{x\in[r]}\tr\fleft[\Lambda_x\tau\fright] \\
	&\leq\sum_{x\in[r]}\left\lvert\tr\fleft[\Lambda_x\tau\fright]\right\rvert \\
	&\leq\sum_{x\in[r]}\left(\tr\fleft[\Lambda_x\rho_x\fright]\right)^\frac{\alpha-1}{\alpha}\exp\left(\frac{\alpha-1}{\alpha}\sw{D}_\alpha\fleft(\tau\middle\|\rho_x\fright)\right) \label{pf:oneshot-state-1}\\
	&=\sum_{x\in[r]}p_{\min}\left(\tr\fleft[\Lambda_x\rho_x\fright]\right)^\frac{\alpha-1}{\alpha} \notag\\
	&\quad\times\exp\left(\frac{\alpha-1}{\alpha}\sw{D}_\alpha\fleft(\tau\middle\|\rho_x\fright)+\ln\left(\frac{1}{p_{\min}}\right)\right) \\
	&\leq\sum_{x\in[r]}p_x\left(\tr\fleft[\Lambda_x\rho_x\fright]\right)^\frac{\alpha-1}{\alpha} \notag\\
	&\quad\times\exp\left(\frac{\alpha-1}{\alpha}\max_{x'\in[r]}\sw{D}_\alpha\fleft(\tau\middle\|\rho_{x'}\fright)+\ln\left(\frac{1}{p_{\min}}\right)\right) \\
	&\leq\left(\sum_{x\in[r]}p_x\tr\fleft[\Lambda_x\rho_x\fright]\right)^\frac{\alpha-1}{\alpha} \notag\\
	&\quad\times\exp\left(\frac{\alpha-1}{\alpha}\max_{x\in[r]}\sw{D}_\alpha\fleft(\tau\middle\|\rho_x\fright)+\ln\left(\frac{1}{p_{\min}}\right)\right). \label{pf:oneshot-state-2}
\end{align}
Here \eqref{pf:oneshot-state-1} follows from Lemma~\ref{lem:converse-hoeffding}; Eq.~\eqref{pf:oneshot-state-2} uses the concavity of the function $r\mapsto r^\frac{\alpha-1}{\alpha}$ for all $\alpha\in(1,+\infty)$.  Then we have that
\begin{align}
	-\ln\left(\sum_{x\in[r]}p_x\tr\fleft[\Lambda_x\rho_x\fright]\right)&\leq\max_{x\in[r]}\sw{D}_\alpha\fleft(\tau\middle\|\rho_x\fright)+\frac{\alpha}{\alpha-1}\ln\left(\frac{1}{p_{\min}}\right).
\end{align}
Since this holds for every unit-trace Hermitian operator $\tau$ and every POVM $\Lambda_{[r]}$, we conclude that
\begin{align}
	&-\ln P_\abb{err}\fleft(\en{E}\fright) \notag\\
	&\quad=\sup_{\Lambda_{[r]}\in\s{M}_{A,r}}-\ln\left(\sum_{x\in[r]}p_x\tr\fleft[\Lambda_x\rho_x\fright]\right) \\
	&\quad\leq\inf_{\tau\in\aff\fleft(\s{D}_A\fright)}\max_{x\in[r]}\sw{D}_\alpha\fleft(\tau\middle\|\rho_x\fright)+\frac{\alpha}{\alpha-1}\ln\left(\frac{1}{p_{\min}}\right) \\
	&\quad=\sup_{s_{[r]}\in\s{P}_r}\inf_{\tau\in\aff\fleft(\s{D}_A\fright)}\sum_{x\in[r]}s_x\sw{D}_\alpha\fleft(\tau\middle\|\rho_x\fright)+\frac{\alpha}{\alpha-1}\ln\left(\frac{1}{p_{\min}}\right). \label{pf:oneshot-state-3}
\end{align}
Here \eqref{pf:oneshot-state-3} follows from an application of the Sion minimax theorem~\cite{sion1958GeneralMinimaxTheorems}.
\end{proof}

Using the converse bound in Proposition~\ref{prop:oneshot-state} on the one-shot error probability, we strengthen Theorem~\ref{thm:asymptotic-state} and fully recover the log-Euclidean upper bound on $\overline{E}_\abb{err}(\en{E})$.

\begin{corollary}
\label{cor:asymptotic-state}
Let $\rho_{[r]}\in\s{D}_A^{[r]}$ be a tuple of states.  Then
\begin{align}
	\overline{E}_\abb{err}\fleft(\en{E}\fright)&\leq C^\flat\fleft(\rho_{[r]}\fright).
\end{align}
\end{corollary}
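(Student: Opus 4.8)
The plan is to lift the one-shot converse bound of Proposition~\ref{prop:oneshot-state} to the $n$-copy level and then take $n\to\infty$ followed by $\alpha\searrow1$. For each positive integer $n$, consider the blocklength-$n$ ensemble $\en{E}_n\equiv(p_{[r]},(\rho_x^{\otimes n})_{x\in[r]})$ on the system $A^n$; unrolling the definition of $\overline{E}_\abb{err}(\en{E})$ shows that it equals $\limsup_{n\to+\infty}-\frac{1}{n}\ln P_\abb{err}(\en{E}_n)$. Since the prior $p_{[r]}$ is the same for every $n$, the quantity $p_{\min}$ does not change with $n$, so applying Proposition~\ref{prop:oneshot-state} to $\en{E}_n$ (read in the equivalent $\inf_\tau\max_x$ form that appears just before the final minimax step of its proof) yields, for all $\alpha\in(1,+\infty)$,
\[
-\ln P_\abb{err}(\en{E}_n)\leq\inf_{\tau\in\aff(\s{D}_{A^n})}\max_{x\in[r]}\sw{D}_\alpha(\tau\|\rho_x^{\otimes n})+\frac{\alpha}{\alpha-1}\ln\left(\frac{1}{p_{\min}}\right).
\]

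The first key step is to single-letterize the divergence-radius term. I would restrict the infimum over $\aff(\s{D}_{A^n})$ to product operators $\tau=\sigma^{\otimes n}$ with $\sigma\in\aff(\s{D}_A)$, which can only increase the infimum, and then invoke the additivity of the extended sandwiched Rényi divergence~\cite[Theorem~5.3]{ji2024BarycentricBoundsError} to write $\sw{D}_\alpha(\sigma^{\otimes n}\|\rho_x^{\otimes n})=n\,\sw{D}_\alpha(\sigma\|\rho_x)$. This bounds the term by $n\inf_{\sigma\in\aff(\s{D}_A)}\max_x\sw{D}_\alpha(\sigma\|\rho_x)$, and a further harmless restriction of the infimum to genuine states $\sigma\in\s{D}_A\subseteq\aff(\s{D}_A)$ preserves the inequality. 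Dividing by $n$ and taking $\limsup_{n\to+\infty}$ makes the additive $\frac{\alpha}{\alpha-1}\ln(1/p_{\min})$ term vanish, leaving
\[
\overline{E}_\abb{err}(\en{E})\leq\inf_{\sigma\in\s{D}_A}\max_{x\in[r]}\sw{D}_\alpha(\sigma\|\rho_x)\qquad\text{for all }\alpha\in(1,+\infty).
\]

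It remains to send $\alpha\searrow1$. Because the sandwiched Rényi divergence is monotonically nondecreasing in $\alpha$ and converges to the Umegaki divergence as $\alpha\searrow1$, the right-hand side above is monotone in $\alpha$, and a standard monotone-convergence argument—interchanging the decreasing limit with the infimum over $\sigma$ and with the finite maximum over $x$—gives $\lim_{\alpha\searrow1}\inf_{\sigma\in\s{D}_A}\max_x\sw{D}_\alpha(\sigma\|\rho_x)=\inf_{\sigma\in\s{D}_A}\max_x D(\sigma\|\rho_x)$. Identifying this divergence radius with $C^\flat(\rho_{[r]})$ via Ref.~\cite[Proposition~A.1]{mosonyi2021DivergenceRadiiStrong} and \eqref{eq:euclidean}, exactly as in the last step of the proof of Theorem~\ref{thm:asymptotic-state}, then closes the argument.

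I expect the main obstacle to be the single-letterization: one must verify that the product ansatz $\tau=\sigma^{\otimes n}$ together with the additivity result is genuinely what controls the regularized divergence radius, and that each successive restriction of the inner optimization (from $\aff(\s{D}_{A^n})$ to product operators, and then to $\s{D}_A$) only loosens the bound in the harmless upper direction. By contrast, the $\alpha\searrow1$ interchange is comparatively benign, since the underlying convergence $\sw{D}_\alpha\to D$ is monotone; restricting to $\s{D}_A$ before taking that limit is convenient precisely because this pointwise convergence is cleanest for states.
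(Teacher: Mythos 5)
Your proposal is correct and is essentially the argument the paper has in mind: the paper's own ``proof'' of Corollary~\ref{cor:asymptotic-state} simply defers to the companion paper~\cite[Theorem~15]{ji2024BarycentricBoundsError}, whose derivation is exactly your route --- apply the one-shot bound of Proposition~\ref{prop:oneshot-state} (in its $\inf_\tau\max_x$ form) to the $n$-copy ensemble, single-letterize via the product ansatz $\tau=\sigma^{\otimes n}$ and the additivity of the extended sandwiched R\'enyi divergence~\cite[Theorem~5.3]{ji2024BarycentricBoundsError}, let $n\to+\infty$ to kill the $\frac{\alpha}{\alpha-1}\ln(1/p_{\min})$ term, and then take $\alpha\searrow1$ and invoke the minimax identification of the divergence radius with $C^\flat(\rho_{[r]})$. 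Your handling of the two potentially delicate points (the restrictions of the infimum only loosen the bound in the harmless direction, and the $\alpha\searrow1$ interchange needs only pointwise monotone convergence at a near-optimal state $\sigma\in\s{D}_A$) is sound, so there is no gap.
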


\begin{proof}
See our companion paper~\cite[Theorem~15]{ji2024BarycentricBoundsError}.
\end{proof}

\section{Quantum channel exclusion}
\label{sec:exclusion-channel}

In the task of quantum channel exclusion~\cite[Appendix~E~4]{huang2024ExactQuantumSensing}, the experimenter is faced with a processing device that implements an unknown channel.  The device is represented by an ensemble of channels, $\en{N}\equiv(p_{[r]},\ch{N}_{[r]})$ with $r\geq2$, $p_{[r]}\in\itr(\s{P}_r)$, and $\ch{N}_{[r]}\in\s{C}_{A\to B}^{[r]}$, and this indicates that for each $x\in[r]$, there is a prior probability $p_x\in(0,1)$ with which the device always implements the channel $\ch{N}_x\in\s{C}_{A\to B}$.  The experimenter's goal is to submit an index $x'\in[r]$ that \emph{differs} from the label of the channel that the device actually implements.

When the experimenter is allowed to invoke the processing device $n$ times, the most general strategy for channel exclusion, known as an \emph{adaptive} strategy, is represented by a quantum comb with $n$ empty slots~\cite{gutoski2007GeneralTheoryQuantum,chiribella2008MemoryEffectsQuantum,chiribella2008QuantumCircuitArchitecture} (see our companion paper~\cite[Section~V.A]{ji2024BarycentricBoundsError} for details).  For an adaptive strategy $\ch{S}^{(n)}\equiv(\rho,\ch{A}_{[n-1]},\Lambda_{[r]})$ with $n$ invocations, we let $\ch{Q}_x^{(n)}\equiv(\rho,\ch{A}_{[n-1]},\Lambda_x)$ denote the ``substrategy'' corresponding to the outcome $x\in[r]$, and we let $\zeta(\ch{Q}_x^{(n)};\ch{N})$ denote the probability of obtaining the outcome $x$ when invoking $n$ times the channel $\ch{N}$ with the strategy $\ch{S}^{(n)}$.  Then it is sensible to write $\ch{S}^{(n)}\equiv\ch{Q}_{[r]}^{(n)}$.  Consequently, the \emph{(nonasymptotic) error probability} of channel exclusion \emph{with $n$ invocations} for the ensemble $\en{N}$ is given by
\begin{align}
	P_\abb{err}\fleft(n;\en{N}\fright)&\coloneq\inf_{\ch{Q}_{[r]}^{(n)}}\sum_{x\in[r]}p_x\zeta\fleft(\ch{Q}_x^{(n)};\ch{N}_x\fright).
\end{align}
The \emph{(asymptotic) error exponents} of channel exclusion for the ensemble $\en{N}$ are defined as
\begin{align}
	\underline{E}_\abb{err}\fleft(\en{N}\fright)&\coloneq\liminf_{n\to+\infty}\sup_{\ch{Q}_{[r]}^{(n)}}-\frac{1}{n}\ln\left(\sum_{x\in[r]}p_x\zeta\fleft(\ch{Q}_x^{(n)};\ch{N}_x\fright)\right), \\
	\overline{E}_\abb{err}\fleft(\en{N}\fright)&\coloneq\limsup_{n\to+\infty}\sup_{\ch{Q}_{[r]}^{(n)}}-\frac{1}{n}\ln\left(\sum_{x\in[r]}p_x\zeta\fleft(\ch{Q}_x^{(n)};\ch{N}_x\fright)\right).
\end{align}

\subsection{Upper bound on the asymptotic error exponent}
\label{sec:asymptotic-channel}

We provide an alternative proof for the barycentric upper bound on the asymptotic error exponent $\underline{E}_\abb{err}(\en{N})$ of channel exclusion in our companion paper~\cite[Theorem~24]{ji2024BarycentricBoundsError}, based on the fact that the Belavkin--Staszewski channel divergence is a strong converse bound on the error exponent of asymmetric binary channel hypothesis testing~\cite[Theorem~49]{fang2021GeometricRenyiDivergence}.

\begin{lemma}[\!\!{\cite[Theorem~49]{fang2021GeometricRenyiDivergence}}]
\label{lem:converse-channel}
Let $\ch{T},\ch{N}\in\s{C}_{A\to B}$ be two channels, and let $(\ch{Q}_{[2]}^{(n)})_n$ be a sequence of two-outcome adaptive strategies.  If
\begin{align}
	\liminf_{n\to+\infty}-\frac{1}{n}\ln\zeta\fleft(\ch{Q}_1^{(n)};\ch{N}\fright)&>\g{D}\fleft(\ch{T}\middle\|\ch{N}\fright),
\end{align}
then
\begin{align}
	\limsup_{n\to+\infty}\zeta\fleft(\ch{Q}_1^{(n)};\ch{T}\fright)&=0.
\end{align}
\end{lemma}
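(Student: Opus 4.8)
The plan is to mirror the structure of the proof of Lemma~\ref{lem:converse-stein}, replacing the i.i.d.\ additivity of the sandwiched R\'enyi divergence by the amortization properties of the geometric R\'enyi channel divergence established in Ref.~\cite{fang2021GeometricRenyiDivergence}. Fix $\alpha\in(1,2]$, and write $p_n\equiv\zeta(\ch{Q}_1^{(n)};\ch{T})$ and $q_n\equiv\zeta(\ch{Q}_1^{(n)};\ch{N})$. Running the binary substrategy $\ch{Q}_1^{(n)}\equiv(\rho,\ch{A}_{[n-1]},\Lambda_1)$ produces a final state, on which the two-outcome measurement $\{\Lambda_1,\1-\Lambda_1\}$ yields the Bernoulli distribution $(p_n,1-p_n)$ when the invoked channel is $\ch{T}$ and $(q_n,1-q_n)$ when it is $\ch{N}$. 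First I would aim to control the classical binary sandwiched R\'enyi divergence between these two Bernoulli distributions.

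The core estimate is the single-letter bound
\[\sw{D}_\alpha\fleft((p_n,1-p_n)\middle\|(q_n,1-q_n)\fright)\leq n\,\g{D}_\alpha\fleft(\ch{T}\middle\|\ch{N}\fright).\]
To obtain it, I would first apply the data-processing inequality for the sandwiched R\'enyi divergence under the final binary measurement, and then use the ordering $\sw{D}_\alpha\leq\g{D}_\alpha$ between the sandwiched and geometric R\'enyi divergences of states, so that the left-hand side is bounded by $\g{D}_\alpha$ evaluated on the two output states produced by the adaptive protocol against $\ch{T}$ and against $\ch{N}$. Next I would invoke the amortization/chain-rule property of the geometric R\'enyi channel divergence: unrolling the comb one channel use at a time, each invocation contributes at most $\g{D}_\alpha(\ch{T}\|\ch{N})$ to the output divergence, while the interleaved processing channels $\ch{A}_{[n-1]}$ do not increase it, yielding the factor-$n$ bound by induction.

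Expanding the classical binary divergence and discarding the nonnegative contribution of the complementary outcome gives the two-point inequality $\alpha\ln p_n+(1-\alpha)\ln q_n\leq(\alpha-1)\,n\,\g{D}_\alpha(\ch{T}\|\ch{N})$, equivalently
\[\frac{1}{n}\ln p_n\leq\frac{\alpha-1}{\alpha}\fleft(\frac{1}{n}\ln q_n+\g{D}_\alpha\fleft(\ch{T}\middle\|\ch{N}\fright)\fright).\]
Since $\g{D}_\alpha(\ch{T}\|\ch{N})\searrow\g{D}(\ch{T}\|\ch{N})$ as $\alpha\searrow1$ and the hypothesis gives $\liminf_n-\frac{1}{n}\ln q_n>\g{D}(\ch{T}\|\ch{N})$, I would fix $\alpha$ close enough to $1$ so that $\g{D}_\alpha(\ch{T}\|\ch{N})$ is strictly below this liminf. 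Then for all large $n$ the bracket on the right is bounded above by a strictly negative constant, forcing $\frac{1}{n}\ln p_n\leq-\delta$ for some $\delta>0$, hence $p_n\to0$ and $\limsup_n\zeta(\ch{Q}_1^{(n)};\ch{T})=0$.

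I expect the main obstacle to be the amortization step. Establishing that the geometric R\'enyi channel divergence collapses under adaptive composition---so that $n$ interleaved uses cost only $n\,\g{D}_\alpha(\ch{T}\|\ch{N})$ rather than growing under adaptivity---is precisely the delicate content of Ref.~\cite{fang2021GeometricRenyiDivergence} and relies on the operator-convexity features that distinguish the geometric R\'enyi divergence from the sandwiched one. By contrast, the reduction to the classical two-point inequality and the limiting argument in $\alpha$ are routine adaptations of the binary strong-converse template underlying Lemma~\ref{lem:converse-stein}.
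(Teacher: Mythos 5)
The paper offers no proof of this lemma at all --- it is imported verbatim as Theorem~49 of Ref.~\cite{fang2021GeometricRenyiDivergence} --- so the relevant comparison is with that reference's argument, which your proposal reconstructs faithfully and correctly: the chain rule (amortization collapse) of $\g{D}_\alpha$ for $\alpha\in(1,2]$ yields the factor-$n$ bound on the output divergence of an adaptive protocol, data processing under the final two-outcome measurement plus the two-point inequality gives your displayed exponent trade-off, and taking $\alpha\searrow1$ via $\g{D}_\alpha(\ch{T}\,\|\,\ch{N})\to\g{D}(\ch{T}\,\|\,\ch{N})$ forces $\zeta(\ch{Q}_1^{(n)};\ch{T})\to0$ under the stated hypothesis. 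You also correctly single out the chain rule as the one genuinely hard ingredient, whose proof you outsource to Ref.~\cite{fang2021GeometricRenyiDivergence} (just as the paper outsources the entire lemma); granting it, every remaining step checks out.
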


\begin{theorem}
\label{thm:asymptotic-channel}
Let $\ch{N}_{[r]}\in\s{C}_{A\to B}^{[r]}$ be a tuple of channels.  Then
\begin{align}
	\underline{E}_\abb{err}\fleft(\en{N}\fright)&\leq\sup_{s_{[r]}\in\s{P}_r}\inf_{\ch{T}\in\s{C}_{A\to B}}\sum_{x\in[r]}s_x\g{D}\fleft(\ch{T}\middle\|\ch{N}_x\fright).
\end{align}
\end{theorem}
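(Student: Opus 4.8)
The plan is to follow the proof of Theorem~\ref{thm:asymptotic-state} line by line, substituting the strong converse of the quantum Stein's lemma (Lemma~\ref{lem:converse-stein}) with its channel analog (Lemma~\ref{lem:converse-channel}), and accommodating the fact that the latter is phrased only for two-outcome adaptive strategies.

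First I would fix an arbitrary channel $\ch{T}\in\s{C}_{A\to B}$ and an arbitrary sequence of $r$-outcome adaptive strategies $(\ch{Q}_{[r]}^{(n)})_n$, and assert the existence of some $x_\star\in[r]$ with
\begin{align}
	\liminf_{n\to+\infty}-\frac{1}{n}\ln\zeta\fleft(\ch{Q}_{x_\star}^{(n)};\ch{N}_{x_\star}\fright)\leq\g{D}\fleft(\ch{T}\middle\|\ch{N}_{x_\star}\fright).
\end{align}
To prove this by contradiction, I would suppose the strict reverse inequality holds for every $x\in[r]$. For each fixed $x$, I would pair the substrategy $\ch{Q}_x^{(n)}=(\rho,\ch{A}_{[n-1]},\Lambda_x)$ with its complement $(\rho,\ch{A}_{[n-1]},\1-\Lambda_x)$ to form a two-outcome adaptive strategy whose first-outcome probability equals $\zeta(\ch{Q}_x^{(n)};\cdot)$ under any channel. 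Lemma~\ref{lem:converse-channel} applied to this strategy (with null channel $\ch{N}_x$ and alternative $\ch{T}$) then forces $\limsup_{n\to+\infty}\zeta(\ch{Q}_x^{(n)};\ch{T})=0$ for every $x\in[r]$, contradicting the normalization $\sum_{x\in[r]}\zeta(\ch{Q}_x^{(n)};\ch{T})=1$, which holds because $\Lambda_{[r]}$ is a POVM. This establishes $x_\star$ and hence
\begin{align}
	\min_{x\in[r]}\liminf_{n\to+\infty}-\frac{1}{n}\ln\zeta\fleft(\ch{Q}_x^{(n)};\ch{N}_x\fright)\leq\max_{x\in[r]}\g{D}\fleft(\ch{T}\middle\|\ch{N}_x\fright).
\end{align}

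Taking the supremum over strategies on the left and the infimum over $\ch{T}$ on the right, and then replaying the argument of \eqref{pf:asymptotic-state-3}--\eqref{pf:asymptotic-state-5} (interchanging the outer supremum with the $\liminf$ via compactness of the set of $n$-invocation combs, using that $\sum_{x}p_x\zeta$ and $\max_x\zeta$ agree to first exponential order since the priors are bounded away from $0$, and pulling $\min_x$ out of the $\liminf$), I would obtain
\begin{align}
	\underline{E}_\abb{err}\fleft(\en{N}\fright)\leq\inf_{\ch{T}\in\s{C}_{A\to B}}\max_{x\in[r]}\g{D}\fleft(\ch{T}\middle\|\ch{N}_x\fright).
\end{align}
It then remains to convert the right-hand side into the barycentric form. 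Since $\ch{T}\mapsto\g{D}(\ch{T}\|\ch{N}_x)$ is convex (being a supremum over input states of maps of the form $\ch{T}\mapsto\g{D}(\ch{T}_{A\to B}[\rho]\|(\ch{N}_x)_{A\to B}[\rho])$, and the Belavkin--Staszewski divergence is jointly convex in its arguments), while $s_{[r]}\mapsto\sum_x s_x\g{D}(\ch{T}\|\ch{N}_x)$ is linear over the compact simplex $\s{P}_r$ and $\s{C}_{A\to B}$ is convex and compact, the Sion minimax theorem yields $\inf_{\ch{T}}\max_{x}\g{D}(\ch{T}\|\ch{N}_x)=\sup_{s_{[r]}}\inf_{\ch{T}}\sum_{x}s_x\g{D}(\ch{T}\|\ch{N}_x)$, which is exactly the claimed bound.

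The step I expect to be the main obstacle is the reduction to two-outcome strategies required to invoke Lemma~\ref{lem:converse-channel}: I must carefully confirm that appending the complementary element $\1-\Lambda_x$ produces a bona fide two-outcome adaptive strategy in the precise sense of the lemma and that its first-outcome statistic reproduces $\zeta(\ch{Q}_x^{(n)};\cdot)$ for both $\ch{N}_x$ and $\ch{T}$. A secondary concern is the concluding minimax: because $\g{D}$ may equal $+\infty$ when the Choi support condition fails, I would need the appropriate lower semicontinuity of $\ch{T}\mapsto\g{D}(\ch{T}\|\ch{N}_x)$ on $\s{C}_{A\to B}$ to legitimately apply Sion's theorem, in analogy with the divergence-radius identity invoked via Ref.~\cite[Proposition~A.1]{mosonyi2021DivergenceRadiiStrong} in the state case.
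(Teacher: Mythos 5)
Your proposal matches the paper's proof essentially step for step: the same contradiction argument based on Lemma~\ref{lem:converse-channel} (whose hypothesis is negated for all $x$, then refuted by the normalization $\sum_{x\in[r]}\zeta(\ch{Q}_x^{(n)};\ch{T})=1$ under the dummy channel $\ch{T}$), the same chain of inequalities interchanging $\sup$ and $\liminf$ and passing from the prior-weighted error to $\min_x$, and the same final conversion of $\inf_{\ch{T}}\max_{x}\g{D}(\ch{T}\|\ch{N}_x)$ into the barycentric form. The only differences are presentational: you make explicit the two-outcome reduction needed to invoke Lemma~\ref{lem:converse-channel} (which the paper leaves implicit), and you justify the concluding minimax swap by Sion's theorem with flagged semicontinuity/infinite-value caveats, whereas the paper outsources exactly this step to Ref.~\cite[Lemma~3]{ji2024BarycentricBoundsError}.
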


\begin{proof}
Let $\ch{T}\in\s{C}_{A\to B}$ be a state, and let $(\ch{Q}_{[r]}^{(n)})_n$ be a sequence of adaptive strategies such that  for every positive integer $n$, $\ch{Q}_{[r]}^{(n)}$ has $n$ invocations.  We assert that there exists $x_\star\in[r]$ such that
\begin{align}
	\label{pf:asymptotic-channel-1}
	\liminf_{n\to+\infty}-\frac{1}{n}\ln\zeta\fleft(\ch{Q}_{x_\star}^{(n)};\ch{N}_{x_\star}\fright)&\leq\g{D}\fleft(\ch{T}\middle\|\ch{N}_{x_\star}\fright).
\end{align}
To see this, consider the opposite situation where
\begin{align}
	\liminf_{n\to+\infty}-\frac{1}{n}\ln\zeta\fleft(\ch{Q}_x^{(n)};\ch{N}_x\fright)&>\g{D}\fleft(\ch{T}\middle\|\ch{N}_x\fright)\quad\forall x\in[r].
\end{align}
By Lemma~\ref{lem:converse-channel}, this implies that
\begin{align}
	\limsup_{n\to+\infty}\zeta\fleft(\ch{Q}_x^{(n)};\ch{T}\fright)&=0\quad\forall x\in[r],
\end{align}
which contradicts the fact that $\sum_{x\in[r]}\zeta\fleft(\ch{Q}_x^{(n)};\ch{T}\fright)=1$ for every positive integer $n$.  This shows by contradiction the existence of $x_\star\in[r]$ satisfying \eqref{pf:asymptotic-channel-1}.  It follows from \eqref{pf:asymptotic-channel-1} that
\begin{align}
	\min_{x\in[r]}\liminf_{n\to+\infty}-\frac{1}{n}\ln\zeta\fleft(\ch{Q}_{x_\star}^{(n)};\ch{N}_{x_\star}\fright)&\leq\max_{x\in[r]}\g{D}\fleft(\ch{T}\middle\|\ch{N}_{x_\star}\fright).
\end{align}
Since this holds for every channel $\ch{T}$ and every sequence of adaptive strategies $(\ch{Q}_{[r]}^{(n)})_n$, we infer that
\begin{align}
	&\sup_{\mleft(\ch{Q}_{[r]}^{(n)}\mright)_n}\min_{x\in[r]}\liminf_{n\to+\infty}-\frac{1}{n}\ln\zeta\fleft(\ch{Q}_x^{(n)};\ch{N}_x\fright) \notag\\
	&\quad\leq\inf_{\ch{T}\in\s{C}_{A\to B}}\max_{x\in[r]}\g{D}\fleft(\ch{T}\middle\|\ch{N}_x\fright). \label{pf:asymptotic-channel-2}
\end{align}
It follows that
\begin{align}
	\underline{E}_\abb{err}\fleft(\en{N}\fright)&=\liminf_{n\to+\infty}\sup_{\ch{Q}_{[r]}^{(n)}}-\frac{1}{n}\ln\left(\sum_{x\in[r]}p_x\zeta\fleft(\ch{Q}_x^{(n)};\ch{N}_x\fright)\right) \label{pf:asymptotic-channel-3}\\
	&=\sup_{\mleft(\ch{Q}_{[r]}^{(n)}\mright)_n}\liminf_{n\to+\infty}-\frac{1}{n}\ln\left(\sum_{x\in[r]}p_x\zeta\fleft(\ch{Q}_x^{(n)};\ch{N}_x\fright)\right) \label{pf:asymptotic-channel-4}\\
	&=\sup_{\mleft(\ch{Q}_{[r]}^{(n)}\mright)_n}\liminf_{n\to+\infty}\min_{x\in[r]}-\frac{1}{n}\ln\left(\zeta\fleft(\ch{Q}_x^{(n)};\ch{N}_x\fright)\right) \\
	&\leq\sup_{\mleft(\ch{Q}_{[r]}^{(n)}\mright)_n}\min_{x\in[r]}\liminf_{n\to+\infty}-\frac{1}{n}\ln\left(\zeta\fleft(\ch{Q}_x^{(n)};\ch{N}_x\fright)\right) \\
	&\leq\inf_{\ch{T}\in\s{C}_{A\to B}}\max_{x\in[r]}\g{D}\fleft(\ch{T}\middle\|\ch{N}_x\fright) \label{pf:asymptotic-channel-5}\\
	&=\sup_{s_{[r]}\in\s{P}_r}\inf_{\ch{T}\in\s{C}_{A\to B}}\sum_{x\in[r]}s_x\g{D}\fleft(\ch{T}\middle\|\ch{N}_x\fright). \label{pf:asymptotic-channel-6}
\end{align}
Here \eqref{pf:asymptotic-channel-4} follows because the supremum on the right-hand side of \eqref{pf:asymptotic-channel-3} can be replaced with a maximum; Eq.~\eqref{pf:asymptotic-channel-5} follows from \eqref{pf:asymptotic-channel-2}; Eq.~\eqref{pf:asymptotic-channel-6} follows from Ref.~\cite[Lemma~3]{ji2024BarycentricBoundsError}.
\end{proof}

\section*{Acknowledgements}
\label{sec:acknowledgements}

KJ is grateful to Bartosz Regula for bringing Ref.~\cite[Exercise~3.57]{hayashi2017QuantumInformationTheory} to his attention.  The formation of this work benefited from the conference ``Beyond IID in Information Theory,'' held at the University of Illinois Urbana-Champaign from July 29 to August 2, 2024, and supported by NSF Grant No.~2409823.  KJ acknowledges support from the NSF under grant no.~2329662.  HKM and MMW acknowledge support from the NSF under grant no.~2304816 and grant no.~2329662 and AFRL under agreement no.~FA8750-23-2-0031.  The work of MM was partially funded by the National Research, Development and Innovation Office of Hungary via the research grants K 146380 and EXCELLENCE 151342, and by the Ministry of Culture and Innovation and the National Research, Development and Innovation Office within the Quantum Information National Laboratory of Hungary (Grant No.~2022-2.1.1-NL-2022-00004).

\begin{figure*}[t]
\centering
\subfloat[The strong converse part of the quantum Stein's lemma. \label{fig:converse}]{\includegraphics[scale=0.18]{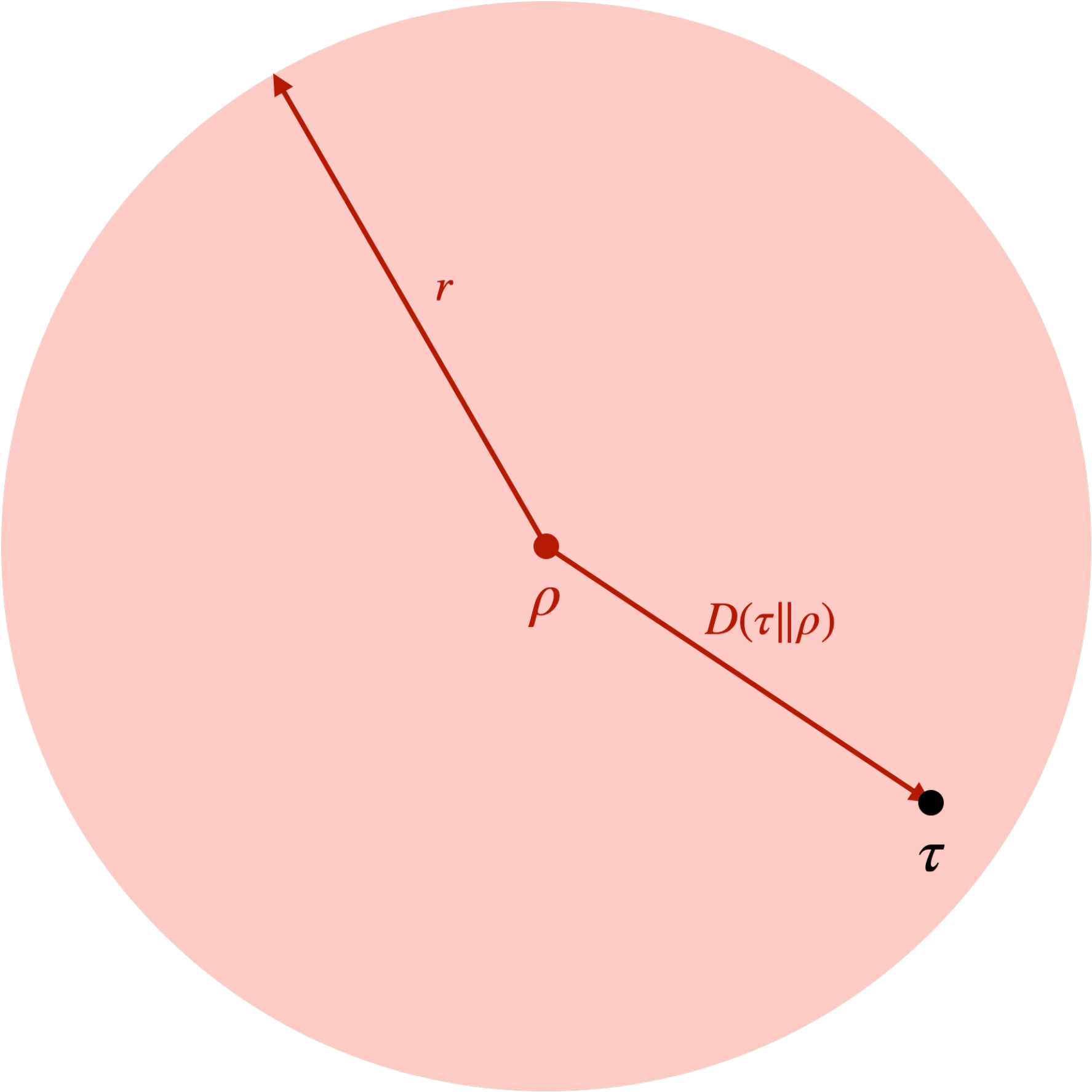}} \qquad\qquad
\subfloat[A contradiction led by a ``dummy'' state if the error exponent exceeds the divergence radius. \label{fig:contradiction}]{\includegraphics[scale=0.18]{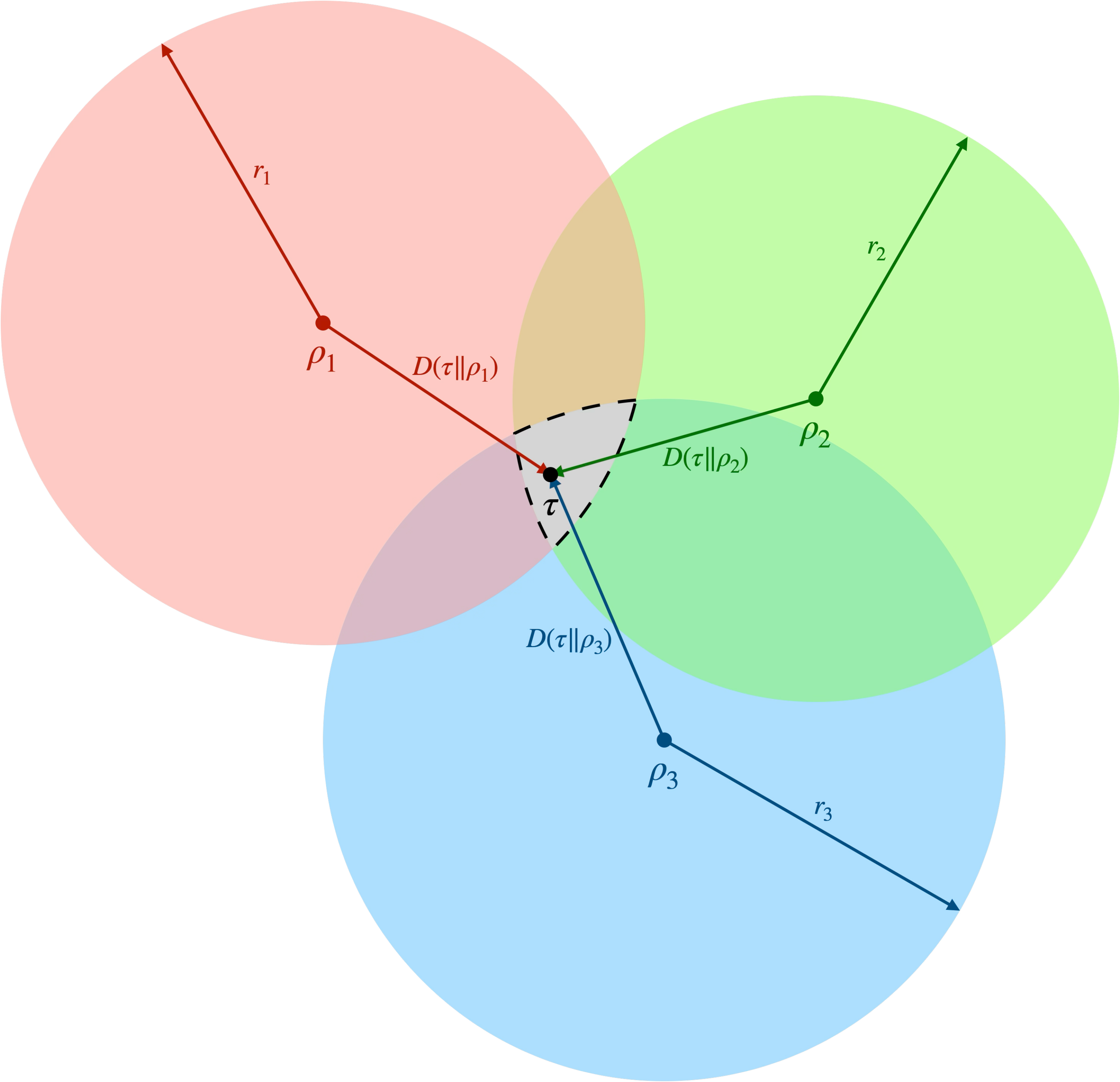}} \\
\subfloat[The error exponent cannot exceed the divergence radius. \label{fig:radius}]{\includegraphics[scale=0.18]{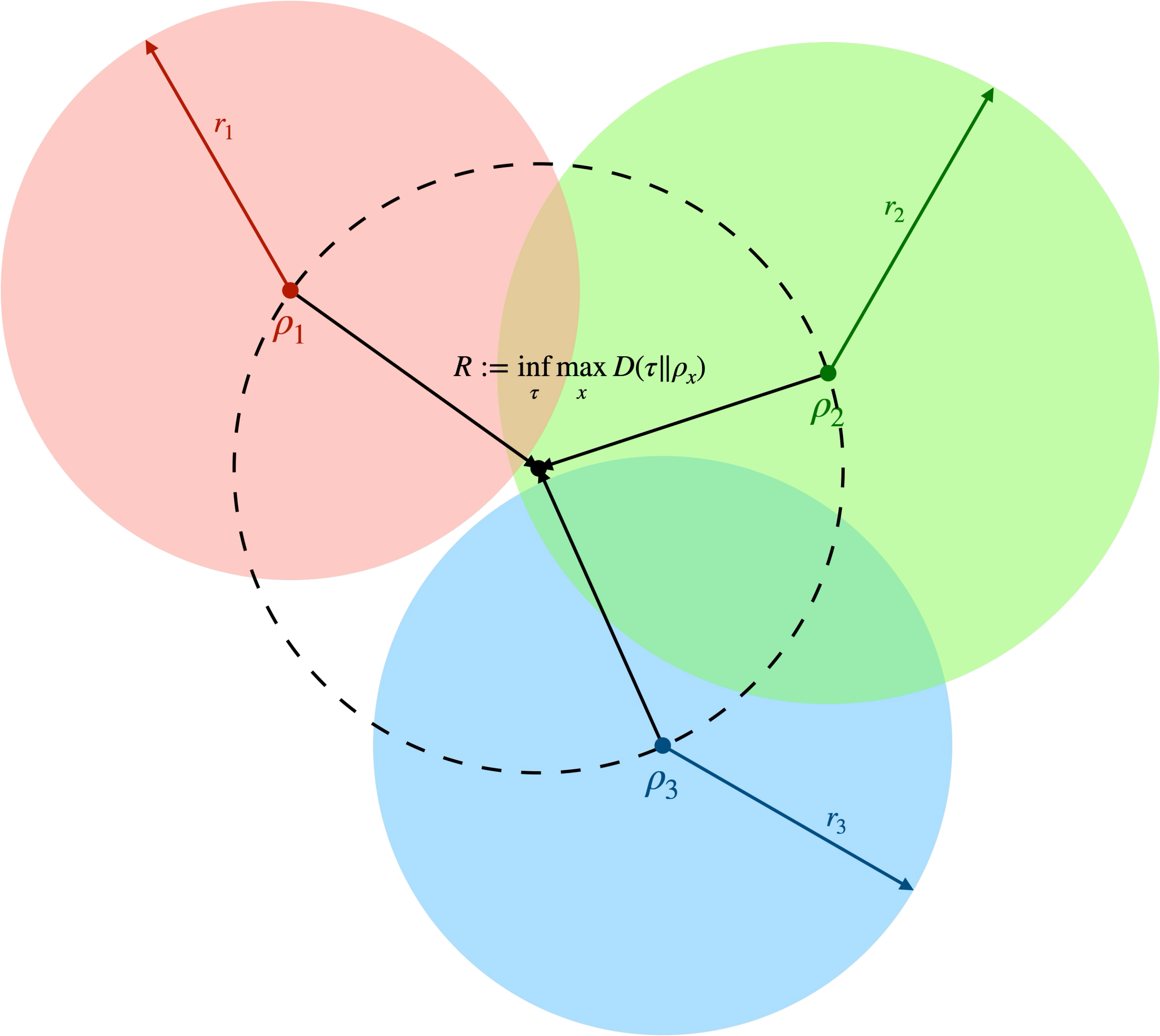}}
\caption{(a) The strong converse part of the quantum Stein's lemma can be understood in a pictorial way as follows.  Let $(\Lambda^{(n)})_n$ be a sequence of measurement operators (i.e., with $0\leq\Lambda^{(n)}\leq\1^{(n)}$ for every positive integer $n$) such that $-\frac{1}{n}\ln\tr[\Lambda^{(n)}\rho^{\otimes n}]\to r$ as $n\to+\infty$.  Then any state $\tau$ inside the $r$-sphere around $\rho$, in the sense that $D(\tau\|\rho)<r$, must satisfy $\tr[\Lambda^{(n)}\tau^{\otimes n}]\to 0$. \quad (b) Consider a sequence of POVMs $(\Lambda_{[3]}^{(n)})_n$ such that $-\frac{1}{n}\ln\tr[\Lambda_x^{(n)}\rho_x^{\otimes n}]\to r_x$ for $x=1,2,3$.  Then consider a state $\tau$ in the intersection of the $r_1$-sphere around $\rho_1$ (red), the $r_2$-sphere around $\rho_2$ (green), and the $r_3$-sphere around $\rho_3$ (blue).  This implies that $\tr[\Lambda_x^{(n)}\tau^{\otimes n}]\to 0$ for $x=1,2,3$, which contradicts $\sum_{x}\tr[\Lambda_x^{(n)}\tau^{\otimes n}]=1$ for all $n$. \quad (c) Therefore, the intersection of the three spheres must be an empty set.  This implies that $-\frac{1}{n}\ln\sum_{x}\tr[\Lambda_x^{(n)}\rho_x^{\otimes n}]\to\min\{r_1,r_2,r_3\}\leq R\coloneq\inf_{\tau}\max_{x}D(\tau\|\rho_x)$.  Since this holds for every sequence of POVMs $(\Lambda_{[3]}^{(n)})_n$, the divergence radius $R$ provides an upper bound on the asymptotic error exponent of quantum state exclusion.}
\label{fig}
\end{figure*}


\clearpage
\bibliographystyle{IEEEtran}
\bibliography{Library}

\end{document}